\documentclass{article}

% if you need to pass options to natbib, use, e.g.:
%     \PassOptionsToPackage{numbers, compress}{natbib}
% before loading neurips_2020

% ready for submission
% \usepackage{neurips_2020}

% to compile a preprint version, e.g., for submission to arXiv, add add the
% [preprint] option:
%     \usepackage[preprint]{neurips_2020}

% to compile a camera-ready version, add the [final] option, e.g.:
%     \usepackage[final]{neurips_2020}

% to avoid loading the natbib package, add option nonatbib:
\usepackage[final, nonatbib]{neurips_2020}
\usepackage{wrapfig}
\usepackage[utf8]{inputenc} % allow utf-8 input
\usepackage[T1]{fontenc}    % use 8-bit T1 fonts
\usepackage{hyperref}       % hyperlinks
\usepackage{url}            % simple URL typesetting
\usepackage{booktabs}       % professional-quality tables
\usepackage{nicefrac}       % compact symbols for 1/2, etc.
\usepackage{microtype}      % microtypography
\usepackage{multicol}
\usepackage{amsmath, amsfonts, amsthm, amssymb}
\usepackage{mathtools}
\usepackage{graphicx, subcaption}
\usepackage{afterpage}
\usepackage{algorithm, algorithmic}
\usepackage{paralist}
\usepackage{bbm}

\newcommand{\cald}{\mathcal{D}}
\newcommand{\calc}{\mathcal{C}}
\newtheorem{theorem}{Theorem}
\newtheorem{proposition}{Proposition}
\newtheorem{lemma}{Lemma}
\newenvironment{proof_sketch}{\textit{Proof Sketch.\hspace*{-1pt}}}{\hfill$\square$}

\title{Mixed Hamiltonian Monte Carlo for Mixed Discrete and Continuous Variables}

% The \author macro works with any number of authors. There are two commands
% used to separate the names and addresses of multiple authors: \And and \AND.
%
% Using \And between authors leaves it to LaTeX to determine where to break the
% lines. Using \AND forces a line break at that point. So, if LaTeX puts 3 of 4
% authors names on the first line, and the last on the second line, try using
% \AND instead of \And before the third author name.

\author{%
    Guangyao Zhou\\
  Vicarious AI\\
  Union City, CA 94587, USA \\
  \texttt{stannis@vicarious.com} \\
  % examples of more authors
  % \And
  % Coauthor \\
  % Affiliation \\
  % Address \\
  % \texttt{email} \\
  % \AND
  % Coauthor \\
  % Affiliation \\
  % Address \\
  % \texttt{email} \\
  % \And
  % Coauthor \\
  % Affiliation \\
  % Address \\
  % \texttt{email} \\
  % \And
  % Coauthor \\
  % Affiliation \\
  % Address \\
  % \texttt{email} \\
}

\begin{document}

\maketitle

\begin{abstract}
	Hamiltonian Monte Carlo (HMC) has emerged as a powerful Markov Chain Monte Carlo (MCMC) method to sample from complex continuous distributions. However, a fundamental limitation of HMC is that it can not be applied to distributions with mixed discrete and continuous variables. In this paper, we propose mixed HMC (M-HMC) as a general framework to address this limitation. M-HMC is a novel family of MCMC algorithms that evolves the discrete and continuous variables in tandem, allowing more frequent updates of discrete variables while maintaining HMC's ability to suppress random-walk behavior. We establish M-HMC's theoretical properties, and present an efficient implementation with Laplace momentum that introduces minimal overhead compared to existing HMC methods. The superior performances of M-HMC over existing methods are demonstrated with numerical experiments on Gaussian mixture models (GMMs), variable selection in Bayesian logistic regression (BLR), and correlated topic models (CTMs).
\end{abstract}

\section{Introduction}

Markov chain Monte Carlo (MCMC) is one of the most powerful methods for sampling from probability distributions. The Metropolis-Hastings (MH) algorithm is a commonly used general-purpose MCMC method, yet is inefficient for complex, high-dimensional distributions because of the random walk nature of its movements. Recently, Hamiltonian Monte Carlo (HMC)~\cite{Duane1987-ta, Neal2012-cp, Betancourt2017-vx} has emerged as a powerful alternative to MH for complex continuous distributions due to its ability to follow the curvature of target distributions using gradients information and make distant proposals with high acceptance probabilities. It enjoyed remarkable empirical success, and (along with its popular variant No-U-Turn Sampler (NUTS)~\cite{Hoffman2014-so}) is adopted as the dominant inference strategy in many probabilistic programming systems~\cite{Carpenter2017-zi, Salvatier2016, bingham2018pyro, Phan2019-vh, ge2018t, Cusumano-Towner:2019:GGP:3314221.3314642}. However, a fundamental limitation of HMC is that it can not be applied to distributions with mixed discrete and continuous variables. 

One existing approach for addressing this limitation involves integrating out the discrete variables(e.g. in Stan\cite{Carpenter2017-zi}, Pyro\cite{bingham2018pyro}), yet it's only applicable on a small-scale, and can not always be carried out automatically. Another approach involves alternating between updating continuous variables using HMC/NUTS and discrete variables using generic MCMC methods (e.g. in PyMC3\cite{Salvatier2016}, Turing.jl\cite{ge2018t}). However, to suppress random walk behavior in HMC, long trajectories are needed. As a result, the discrete variables can only be updated infrequently, limiting the efficiency of this approach. The most promising approach involves updating the discrete and continuous variables in tandem. Since naively making MH updates of discrete variables within HMC results in incorrect samples~\cite{Neal2012-cp}, novel variants of HMC (e.g. discontinuous HMC (DHMC)\cite{Nishimura2017-wv, Zhou2019-uk}, probabilistic path HMC (PPHMC)~\cite{Dinh2017-td}) are developed. However, these methods can not be easily generalized to complicated discrete state spaces (DHMC works best for ordinal discrete parameters, PPHMC is only applicable to phylogenetic trees), and as we show in Sections~\ref{sec:demo} and~\ref{sec:experiments}, DHMC's embedding and algorithmic structure are inefficient.

In this paper, we propose mixed HMC (M-HMC), a novel family of MCMC algorithms that better addresses this limitation. M-HMC provides a general mechanism, applicable to any distributions with mixed support, to evolve the discrete and continuous variables in tandem. It allows more frequent updates of discrete variables while maintaining HMC's ability to suppress random walk behavior, and adopts an efficient implementation (using Laplace momentum) that introduces minimal overhead compared to existing HMC methods. In Section~\ref{sec:theory}, we review HMC and some of its variants involving discrete variables, present M-HMC and rigorously establish its correctness, before presenting its efficient implementation with Laplace momentum and an illustrative application to 1D GMM. We demonstrate M-HMC's superior performances over existing methods with numerical experiments on GMMs, BLR and CTMs in Section~\ref{sec:experiments}, before concluding with discussions in Section~\ref{sec:discussions}.

\section{Mixed Hamiltonian Monte Carlo (M-HMC)}\label{sec:theory}

Our goal is to sample from a target distribution $\pi (x, q^\calc)\propto e^{-U(x, q^\calc)}\text{ on }\Omega\times \mathbb{R}^{N_\calc}$ with mixed discrete variables $x = (x_1, \ldots, x_{N_\cald})\in\Omega$ and continuous variables $q^\calc=(q^\calc_1, \ldots, q^\calc_{N_\calc})\in\mathbb{R}^{N_\calc}$. 

\subsection{Review of HMC and some variants of HMC that involve discrete variables}\label{sec:related_works}

For a continuous target distribution $\pi(q^\calc)\propto e^{-U(q^\calc)}$, the original HMC introduces auxiliary momentum variables $p^\calc\in\mathbb{R}^{N_\calc}$ associated with a kinetic energy function $K^\calc$, and draws samples for $\pi(q^\calc)$ by sampling from the joint distribution $\pi(q^\calc)\chi(p^\calc) (\chi(p^\calc)\propto e^{-K^\calc(p^\calc)})$ with simulations of
\vspace*{-2pt}
\[\frac{\mathrm{d} q^\calc(t)}{\mathrm{d} t} = \nabla K^\calc(p^\calc), \frac{\mathrm{d} p^\calc(t)}{\mathrm{d}t} = -\nabla U(q^\calc)\textit{ (Hamiltonian dynamics)}\]

A foundational tool in applying HMC to distributions with discrete variables is the discontinuous variant of HMC, which operates on piecewise continuous potentials. This was first studied in~\cite{Pakman2013-gi}, where the authors proposed binary HMC to sample from binary distributions $\pi(x)\propto e^{-U(x)}$ for $x\in\Omega=\{-1,1\}^{N_\cald}$. The idea is to embed the binary variables $x$ into the continuum by introducing auxiliary location variables $q^\cald\in\mathbb{R}^{N_\cald}$ associated with a conditional distribution
\[
    \psi(q^\cald | x) : \begin{cases}
        \psi(q^\cald | x) \propto 
        \begin{cases}
            e^{-\frac{1}{2}\sum_{i=1}^{N_d}(q^\cald_i)^2}&\text{(Gaussian)}\\
            e^{-\sum_{i=1}^{N_\cald}|q^\cald_i|}&\text{(Exponential)}
        \end{cases} & \text{ If }sign(q^\cald_i) = x_i, \forall i=1, \cdots, N^\cald\\
        \psi(q^\cald | x) = 0 & \text{ Otherwise}
    \end{cases}
\]

Binary HMC introduces auxiliary momentum variables $p^\cald\in\mathbb{R}^{N_\cald}$ associated with a kinetic energy $K^\cald(p^\cald)=\sum_{i=1}^{N_\cald}(p^\cald_i)^2 / 2$, and operates on the joint distribution $\Psi(q^\cald)\nu(p^\cald) (\nu(p^\cald)\propto e^{-K^\cald(p^\cald)})$ on $\Sigma=\mathbb{R}^{N_\cald}\times\mathbb{R}^{N_\cald}$. The distribution $\Psi(q^\cald)=\sum_{x\in\Omega}\pi(x)\psi(q^\cald | x)$ gives rise to a piecewise continuous potential, and~\cite{Pakman2013-gi} developed a way to exactly integrate Hamiltonian dynamics for $\Psi(q^\cald)\nu(p^\cald)$, taking into account discontinuities in the potential. $x$ and $q^\cald$ are coupled through signs of $q^\cald$ in $\psi$, so we can read out samples for $x$ from the signs of binary HMC samples for $q^\cald$. We show in supplementary that binary HMC is a special case of M-HMC, with Gaussian/exponential binary HMC corresponding to two particular choices of $k^\cald$ (defined in Section~\ref{sec:framework}) in M-HMC.

\cite{Mohasel_Afshar2015-uh} later made the key observation that we can analytically integrate Hamiltonian dynamics with piecewise continuous potentials near a discontinuity while perserving the total (potential and kinetic) energy. The trick is to calculate the potential energy difference $\Delta E$ across an encountered discontinuity, and either refract (replace $p^\cald_\perp$, the component of $p^\cald$ that's perpendicular to the discontinuity boundary, by $\sqrt{\frac{1}{2}||p^\cald_\perp||^2 - \Delta E} (p^\cald_\perp/||p^\cald_\perp||)$) if there's enough kinetic energy ($\frac{1}{2}||p^\cald_\perp||^2 > \Delta E$), or reflect (replace $p^\cald_\perp$ by $-p^\cald_\perp$) if there is not enough kinetic energy ($\frac{1}{2}||p^\cald_\perp||^2 \leq \Delta E$). Reflection/refraction HMC (RRHMC) combines the above observation with the leapfrog integrator, and generalizes binary HMC to arbitrary piecewise continuous potentials with discontinuities across affine boundaries. However, RRHMC is computationally expensive due to the need to detect all encountered discontinuities, and by itself can not directly handle distributions with mixed support.

\cite{Nishimura2017-wv} proposed DHMC as an attempt to address some of the issues of RRHMC. It uses Laplace momentum to avoid the need to detect encountered discontinuities, and handles discrete variables (which it assumes take positive integer values, i.e. $x\in\mathbb{Z}_+^{N_\cald}$) by an embedding into 1D spaces ($x_i = n \iff q^\cald_i\in(a_n, a_{n+1}], 0=a_1 \leq a_2 \leq \cdots$) and a coordinate-wise integrator (a special case of M-HMC with Laplace momentum as shown in Section~\ref{sec:implementation}). In Sections~\ref{sec:demo} and~\ref{sec:experiments}, using numerical experiments, we show that DHMC's embedding is inefficient and sensitive to ordering, and it can not easily generalize to more complicated discrete state spaces; furthermore, its need to update all discrete variables at every step makes it computationally expensive for long HMC trajectories.

\subsection{The general framework of M-HMC}\label{sec:framework}

Formally, M-HMC operates on the expanded state space $\Omega \times \Sigma$, where $\Sigma = \mathbb{T}^{N_\cald} \times \mathbb{R}^{N_\cald} \times \mathbb{R}^{N_\calc} \times \mathbb{R}^{N_\calc}$ with auxiliary location variables $q^\cald\in \mathbb{T}^{N_\cald}$ and momentum variables $p^\cald \in \mathbb{R}^{N_\cald}$ for $x\in\Omega$, and auxiliary momentum variables $p^\calc\in\mathbb{R}^{N_\calc}$ for $q^\calc\in\mathbb{R}^{N_\calc}$. Here $\mathbb{T}^{N_\cald} =\mathbb{R}^{N_\cald} /\tau\mathbb{Z}^{N_\cald}$ denotes the $N_\cald$-dimensional flat torus, and is identified as the hypercube $[0, \tau]^{N_\cald}$ with the 0's and $\tau$'s in different dimensions glued together. We associate $q^\cald$ with a flat potential $U^\cald (q^\cald) = 0, \forall q^\cald \in \mathbb{T}^{N_\cald}$ and $p^\cald$ with a kinetic energy $K^\cald(p^\cald)=\sum_{i = 1}^{N_\cald} k^\cald (p^\cald_i), p^\cald \in \mathbb{R}^{N_\cald}$ where $k^\cald : \mathbb{R} \rightarrow \mathbb{R}^+$ is some kinetic energy, and $p^\calc$ with a kinetic energy\footnote{The simplest choice for $K^\calc$ is $K^\calc(p^\calc)=\sum_{i=1}^{N_\calc}\frac{(p^\calc_i)^2}{2}$, but M-HMC can work with any kinetic energy.} $K^\calc:\mathbb{R}^{N_\calc}\rightarrow \mathbb{R}^+$. Use $Q_i, i = 1, \ldots, N_\cald$ to denote $N_\cald$ irreducible single-site MH proposals, where $Q_i (\tilde{x} |x) > 0$ only when $\tilde{x}_j = x_j, \forall j \neq i$.

Intuitively, M-HMC also ``embeds'' the discrete variables $x$ into the continuum (in the form of $q^\cald$). However, the ``embedding'' is done by combining the original discrete state space $\Omega$ with the flat torus $\mathbb{T}^{N_\cald}$: instead of relying on the embedding structure (e.g. the sign of $q^\cald_i$ in binary HMC, or the value of $q^\cald_i$ in DHMC) to determine $x$ from $q^\cald$, in M-HMC we explicitly record the values of $x$ as we can not read out $x$ from $q^\cald$. $\mathbb{T}^{N_\cald}$ bridges $x$ with the continuous Hamiltonian dynamics, and functions like a ``clock'': the system evolves $q^\cald_i$ with speed determined by the momentum $p^\cald_i$ and makes an attempt to move to a different state for $x_i$ when $q^\cald_i$ reaches $0$ or $\tau$. Such mixed embedding makes M-HMC easily applicable to arbitrary discrete state spaces, but also prevents the use of methods like RRHMC. For this reason, M-HMC introduces probabilistic proposals $Q_i$'s to move around $\Omega$, and probabilistic reflection/refraction actions to handle discontinuities (which now happen at $q^\cald_i\in\{0, \tau\}$).

More concretely, M-HMC evolves according to the following dynamics: If $q^\cald \in (0, \tau)^{N_\cald}$, $x$ remains unchanged, and $q^\cald, p^\cald$ and $q^\calc, p^\calc$ follow the Hamiltonian dynamics
\begin{equation}\label{eqn:hamiltonian}
\text{Discrete}
\begin{cases} 
\frac{\mathrm{d} q_i^\cald(t)}{\mathrm{d} t} = (k^\cald)' (p^\cald_i), i=1, \ldots, N_\cald\\
\frac{\mathrm{d} p^\cald(t)}{\mathrm{d}t} = -\nabla U^\cald(q^\cald) = 0
\end{cases}
\text{Continuous}
\begin{cases}
\frac{\mathrm{d} q^\calc(t)}{\mathrm{d} t} = \nabla K^\calc(p^\calc)\\
\frac{\mathrm{d} p^\calc(t)}{\mathrm{d}t} = -\nabla_{q^\calc} U(x, q^\calc)
\end{cases}
\end{equation}
If $q^\cald$ hits either $0$ or $\tau$ at site $j$ (i.e. $q^\cald_j \in \{ 0, \tau \}$), we propose a new $\tilde{x}\sim Q_j(\cdot|x)$, calculate $\Delta E = \log \frac{\pi (x, q^\calc) Q_j (\tilde{x} |x)}{\pi (\tilde{x}, q^\calc) Q_j (x| \tilde{x})}$, and either \textit{refract} if there's enough kinetic energy $\left(k^\cald (p^\cald_j) > \Delta E\right)$:
\[x\leftarrow \tilde{x}, q_j^\cald\leftarrow \tau - q_j^\cald, p_j^\cald \leftarrow \text{sign}(p_j^\cald)(k^\cald)^{-1}(k^\cald(p_j^\cald) - \Delta E)\] 
or \textit{reflect} if there is not enough kinetic energy $\left(k^\cald (p^\cald_j) \leq \Delta E\right)$: $x\leftarrow x, q^\cald_j\leftarrow q^\cald_j, p^\cald_j\leftarrow -p^\cald_j$.

For the discrete component, because of the flat potential $U^\cald$, we can exactly integrate the Hamiltonian dynamics with arbitrary $k^\cald$. For the continuous component, given a discrete state $x$ and some time $t>0$, use $I(\cdot, \cdot, t | x, U, K^\calc): \mathbb{R}^{N_\calc}\times\mathbb{R}^{N_\calc}\times\mathbb{R}^+\rightarrow\mathbb{R}^{N_\calc}\times\mathbb{R}^{N_\calc}$ to denote a reversible, volume-preserving integrator\footnote{An example is the commonly used leapfrog integrator} that's irreducible and aperiodic and approximately evolves the continuous part of the Hamiltonian dynamics in Equation~\ref{eqn:hamiltonian} for time $t$. Given the current state $x^{(0)}, q^{\calc(0)}$, a full M-HMC iteration first resamples the auxiliary variables
\[q^{\cald(0)}_i \sim \text{Uniform}([0, \tau]), p^{\cald(0)}_i \sim \nu(p)\propto e^{-k^\cald(p)}\text{ for }i=1, \ldots, N_\cald, p^{\calc(0)} \sim \chi(p)\propto e^{-K^\calc(p)}\]
then evolves the discrete variables (using exact integration) and continuous variables (using the integrator $I$) in tandem for a given time $T$, before making a final MH correction like in regular HMC. A detailed description of a full M-HMC iteration is given in Section 1 in the supplementary materials.

Note that if we use conditional distributions for $Q_i$ (i.e. making Gibbs updates), $\Delta E$ would always be 0, and the discrete dynamics in Equation~\ref{eqn:hamiltonian} only determines when and where to make the Gibbs updates. In this special case, M-HMC can be seen as a simple mechanism to allow making Gibbs updates within an HMC iteration using a modified MH correction term, with the frequency of the Gibbs updates determined by the overall M-HMC dynamics.

\subsection{M-HMC samples from the correct distribution}

For notational simplicity, define $\Theta=(q^\cald, p^\cald, q^\calc, p^\calc)$. To prove M-HMC samples from the correct distribution $\pi(x, q^\calc)$, we show that a full M-HMC iteration preserves the joint invariant distribution $\varphi ((x, \Theta)) \propto \pi (x, q^\calc) e^{- \left[ U^\cald (q^\cald) + K^\cald (p^\cald) + K^\calc(p^\calc)\right]}$ and establish its irreducibility and aperiodicity. At each iteration, the resampling can be seen as a Gibbs step, where we resample the auxiliary variables $q^\cald, p^\cald, q^\calc$ from their conditional distribution given $x, q^\calc$. This obviously preserves $\varphi$.  So we only need to prove detailed balance of the evolution of $x$ and $q^\calc$ in an M-HMC iteration (described in detail in the {\textit{M-HMC}} function in Section 1 of the supplementary materials) w.r.t. $\varphi$. Formally, $\forall T > 0$, the {\textit{M-HMC}} function (section 1 of supplementary) defines a transition probability kernel $ R_T ((x, \Theta), B) =\mathbb{P} \left( \textit{M-HMC} (x, \Theta, T) \in B \right), \forall(x, \Theta) \in \Omega\times\Sigma$ and $B \subset \Omega\times\Sigma$ measurable. For all $A \subset \Omega\times\Sigma$ measurable, $\Theta \in \Sigma$, define $A (\Theta) = \{ x \in \Omega : (x, \Theta) \in A \}$. We have
\begin{theorem}
	{\textbf{(Detailed Balance)}} The \textit{M-HMC} function (Section 1 of supplementary) satisfies detailed balance w.r.t.\ the joint invariant distribution $\varphi $, i.e.\ for any measurable sets $A, B \subset \Omega\times\Sigma$,
\begin{equation*}
 \int_{\Sigma} \sum_{x \in A (\Theta)} R_T ((x, \Theta), B) \varphi ((x, \Theta))  \mathrm{d} \Theta = \int_{\Sigma} \sum_{x \in B (\Theta)} R_T ((x, \Theta), A) \varphi ((x, \Theta))  \mathrm{d}\Theta
\end{equation*}

\end{theorem}
\begin{proof_sketch} Use $s=(x, q^\cald, p^\cald, q^\calc, p^\calc), s'=(x', q^{\cald\prime}, p^{\cald\prime}, q^{\calc\prime}, p^{\calc\prime})\in\Omega\times\Sigma$ to denote 2 points.

\vspace*{-7pt}

\paragraph{Sequence of proposals and probabilistic paths} Starting from $s\in\Omega\times\Sigma$, for a given travel time $T$, a concrete M-HMC iteration involves a finite sequence of realized discrete proposals $Y$. If we fix $Y$, the M-HMC iteration (without the final MH correction) specifies a deterministic mapping from $s$ to some $s'$. For a given $Y$, we introduce an associated probabilistic path $\omega(s, T, Y)$ (containing information on $Y$, indices/times and accept/reject decisions for discrete updates, and evolution of $s$) to describe the deterministic trajectory going from $s$ to $s'$ in time $T$ through the M-HMC iteration.

\vspace*{-7pt}

\paragraph{Countable number of probabilistic paths and decomposition of $R_T(s, B)$} Since $T$ and $\Omega$ are finite, traveling from $s$ for time $T$ gives a countable number of possible destinations $s'$. This implies there can only be a countable number of valid probabilistic paths, and we can decompose $R_T(s, B) = \sum_{s'}\sum_{Y} r_{T, Y}(s, s')$. Here we sum over all possible destinations $s'$ and all valid $Y$'s for which $\omega(s, T, Y)$ brings $s$ to $s'$. $r_{T, Y}(s, s')$ denotes the transition probability along $\omega(s, T, Y)$.
\vspace*{-7pt}

\paragraph{Proof of detailed balance} Using similar proof techniques as in RRHMC, we can prove detailed balance for $r_{T, Y}$ (Lemma 4 in supplementary). This in turn proves detailed balance of M-HMC.
\end{proof_sketch}

We defer detailed definitions and proofs to the supplementary. Combining the above theorem with irreducibility and aperiodicity (which follow from irreducibility and aperiodicity of the integrator $I$, and the irreducibility of the $Q_i$'s) proves that M-HMC samples from the correct distribution $\pi (x, q^\calc)$.

\subsection{Efficient M-HMC implementation with Laplace momentum}\label{sec:implementation}

We next present an efficient implementation of M-HMC using Laplace momentum $k^\cald(p)=|p|$. While M-HMC works with any $k^\cald$, using a general $k^\cald$ requires detection of all encountered discontinuities, similar to RRHMC. However, with Laplace momentum, $q^\cald_i$'s speed (given by $(k^{\cald})'(p^\cald_i)$) becomes a constant 1, and we can precompute the occurences of all discontinuities at the beginning of each M-HMC iteration. In particular, we no longer need to explicitly record $q^\cald, p^\cald$, but can instead keep track of only the kinetic energies associated with $x$. Note that we need to use $\tau$ to orchestrate discrete and continuous updates. Here, instead of explicitly setting $\tau$, we propose to alternate discrete and continuous updates, specifying the total travel time $T$, the number of discrete updates $L$, and the number of discrete variables to update each time $n_\cald$. The step sizes are properly scaled (effectively setting $\tau$) to match the desired total travel time $T$. To reduce integration error and ensure a high acceptance rate, we specify a maximum step size $\varepsilon$. A detailed description of the efficient implementation is given in Algorithm~\ref{alg:efficient}. See Section 2 of supplementary for a detailed discussion on how each part of Algorithm~\ref{alg:efficient} can be derived from the original \textit{M-HMC} function in Section 1 of supplementary. The coordinate-wise integrator in DHMC corresponds to setting $n_\cald=N_\cald$ with $Q_i$'s that are implicitly specified through embedding. However, the need to update all discrete variables at each step is computationally expensive for long HMC trajectories. In contrast, M-HMC can flexibly orchestrate discrete and continuous updates depending on models at hand, and introduces minimal overhead ($x$ updates that are usually cheap) compared to existing HMC methods.

\begin{center}
\begin{algorithm}[ht!]
\caption{M-HMC with Laplace momentum}
\label{alg:efficient}
\begin{algorithmic}[1]
\REQUIRE{$U$, target potential; $Q_i, i = 1, \ldots, N_\cald$, single-site proposals; $\varepsilon$, maximum step size; $L$, \# of times to update discrete variables; $n_\cald$, \# of discrete sites to update each time}
\INPUT{$x^{(0)}$, current discrete state; $q^{\calc(0)}$, current continuous location; $T$, travel time}
\OUTPUT{$x$, next discrete state; $q^{\calc}$, next continuous location}
\FUNCTION{M-HMCLaplaceMomentum($x^{(0)}, q^{\calc(0)}, T | U, Q_i, i = 1, \ldots, N_\cald, \varepsilon, L, n^\cald$)}
\STATE $k^{\cald(0)}_i \sim \text{Exponential}(1), i=1, \ldots, N_\cald$, $p^{\calc(0)}_i \sim N(0, 1), i=1, \ldots, N_\calc$
\STATE $x \leftarrow x^{(0)}$, $k^\cald \leftarrow k^{\cald(0)}$, $q^\calc \leftarrow q^{\calc(0)}$, $p^\calc \leftarrow p^{\calc(0)}, \Delta U^\cald \leftarrow 0$
\STATE $\Lambda \sim \text{RandomPermutation}(\{1, \ldots, N_\cald\})$
\STATE $(\eta, M) \leftarrow\textit{GetStepSizesNSteps}(\varepsilon, T, L, N_\cald, n_\cald)$ \# Defined in Section 2 of supplementary
\FOR{$t$ \textbf{from} $1$ \textbf{to} $L$}
\STATE \algorithmicfor\ $s$ \textbf{from} $1$ \textbf{to} $M_t$ \algorithmicdo\ $q^\calc, p^\calc\leftarrow\textit{leapfrog}(q^\calc, p^\calc, \eta_t)$ \algorithmicendfor
\FOR{$s$ \textbf{from} $1$ \textbf{to} $n_\cald$}
\STATE $x, k^\cald, \Delta U^\cald\leftarrow\textit{DiscreteStep}(x, k^\cald, \Delta U^\cald, q^\calc, \Lambda_{[(t - 1)n_\cald + s] \bmod N_\cald})$
\ENDFOR
\ENDFOR
\STATE $E \leftarrow U\left(x, q^\calc\right) + K^\calc(p^\calc)$, $E^{(0)} \leftarrow U\left(x^{(0)}, q^{\calc(0)}\right) + K^\calc(p^{\calc(0)})$
\STATE \algorithmicif\ $\text{Uniform}([0, 1]) >= e^{-(E - E^{(0)} - \Delta U^\cald)}$ \algorithmicthen\ $x\leftarrow x^{(0)}, q^\calc\leftarrow q^{\calc(0)}$ \algorithmicendif
\STATE \textbf{return} $x, q^\calc$
\ENDFUNCTION
\vspace*{-6pt}
\\\hrulefill
\FUNCTION{leapfrog($q^\calc, p^\calc, \tilde{\varepsilon}$)}
\STATE $p^\calc \leftarrow p^\calc - \tilde{\varepsilon}\nabla_{q^\calc} U(x, q^\calc)/2$; $q^\calc \leftarrow q^\calc + \tilde{\varepsilon} p^\calc$; $p^\calc \leftarrow p^\calc - \tilde{\varepsilon}\nabla_{q^\calc} U(x, q^\calc)/2$
\STATE \textbf{return} $q^\calc, p^\calc$
\ENDFUNCTION
\vspace*{-6pt}
\\\hrulefill
\FUNCTION{DiscreteStep($x, k^\cald, \Delta U^\cald, q^\calc, j$)}
\STATE $\tilde{x} \sim Q_j (\cdot |x)$; $\Delta E \leftarrow \log\frac{e^{-U(x, q^\calc)} Q_j (\tilde{x} |x)}{e^{-U(\tilde{x}, q^\calc)} Q_j (x| \tilde{x})}$
\IF{$k^{\cald}_j > \Delta E$}
\STATE $\Delta U^\cald \leftarrow \Delta U^\cald + U(\tilde{x}, q^\calc) - U(x, q^\calc)$
\STATE $x\leftarrow\tilde{x}, k_j^\cald \leftarrow k_j^\cald - \Delta E$ 
\ENDIF
\STATE \textbf{return} $x, k^\cald, \Delta U^\cald$
\ENDFUNCTION
\end{algorithmic}
\end{algorithm}
\vspace*{-36pt}
\end{center}

\subsection{Illustrative application of M-HMC to 1D Gaussian mixture model (GMM)}\label{sec:demo}
\begin{figure}[b!]
    \centering
    \includegraphics[width=\textwidth]{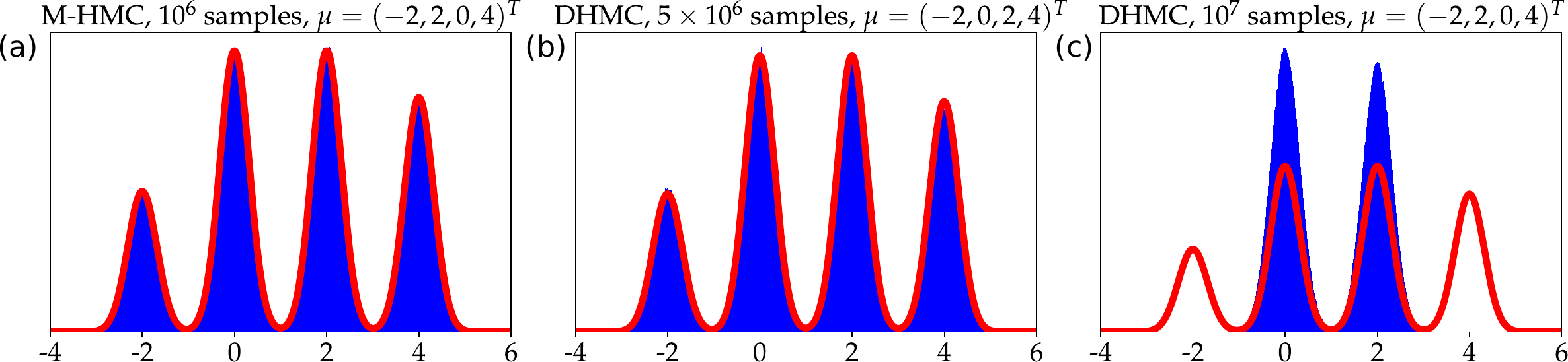}
    \caption{Samples histograms (blue) and true density (red) on 1D GMM for M-HMC and DHMC}
    \label{fig:mhmc_vs_dhmc}
    \vspace{-15pt}
\end{figure}

\vspace*{-7pt}

In this section, we illustrate some important aspects of M-HMC by applying M-HMC to a concrete 1D GMM with 4 mixture componets. Use $x\in\{1, 2, 3, 4\}$ to denote the discrete variable, and $q^\calc\in\mathbb{R}$ to denote the continuous variable. We study the 1D GMM $\pi(x, q^\calc)=\phi_x N(q^\calc|\mu_x, \Sigma)$, where $\phi_1=0.15, \phi_2=\phi_3=0.3, \phi_4=0.25, \Sigma=0.1$, and $\mu_1=-2, \mu_2=0, \mu_3=2, \mu_4=4$.
\vspace*{-6pt}

\paragraph{More frequent discrete updates within HMC are beneficial} The essential idea of M-HMC is to evolve discrete and continuous variables in tandem, allowing more frequent discrete updates within HMC. Figure~\ref{fig:mhwhmc_vs_mhmc}(a) visualizes the evolution of $x, q^\calc$ in an M-HMC iteration on our 1D GMM, and intuitively shows the benefits of such more frequent discrete updates: M-HMC can make frequent attempts to move to a different mixture component; such attempts can often succeed when M-HMC gets close to a different mixture component while traversing the current one; the ability to move to different mixture components within an M-HMC iteration allows M-HMC to make distant proposals, which are accepted with high probabilities due to the use of HMC-like mechanisms. Figure~\ref{fig:mhwhmc_vs_mhmc}(a) demonstrates one such distant proposal in which M-HMC moves across all 4 mixture components in one iteration. Such distant proposals are unlikely to happen in methods that alternate between HMC and discrete updates, limiting the efficiency of such methods. In Section~\ref{sec:experiments}, we would further demonstrate the efficiency of M-HMC when compared with alternatives using numerical experiments.
\vspace*{-21pt}

\paragraph{Naively making discrete updates within HMC is incorrect} Figure~\ref{fig:mhwhmc_vs_mhmc}(left) compares naive MH within HMC (MHwHMC) and M-HMC for 1D GMM. The seemingly trivial distinction naturally comes out of Algorithm~\ref{alg:efficient} with 1 discrete variable, yet corrects the inherent bias in MHwHMC (see Figure~\ref{fig:mhwhmc_vs_mhmc}(b)(c)). This demonstrates the necessity to use the M-HMC framework to evolve discrete and continuous variables in tandem. See Section 3 of supplementary materials for more details.

\vspace*{-9pt}
\paragraph{M-HMC is applicable to arbitrary distributions with mixed support, unlike DHMC} DHMC does not easily generalize to complicated discrete state spaces due to its 1D embedding. A simple illustration is to apply DHMC to 1D GMM, but instead with $\mu_2=2, \mu_3=0$. While the model remains exactly the same, as shown in red curves in Figures~\ref{fig:mhmc_vs_dhmc}(b)(c), due to its sensitivity to the ordering of discrete states, DHMC failed to sample all components even after $10^7$ samples (Figure~\ref{fig:mhmc_vs_dhmc}(c)), even though it can fit well with $5\times10^6$ samples in the original setup (Figure~\ref{fig:mhmc_vs_dhmc}(b)). In contrast, M-HMC suffers no such issue, and works well in both cases with $10^6$ samples (Figures~\ref{fig:mhmc_vs_dhmc}(a) and~\ref{fig:mhwhmc_vs_mhmc}(c)), and in general for arbitrary distributions with mixed support. See Section~\ref{sec:ctm} for another example.

\vspace*{-10pt}
\section{Numerical experiments}\label{sec:experiments}
\vspace*{-10pt}
\begin{wrapfigure}{L}{0.7\textwidth}
    \centering
    \includegraphics[width=0.7\textwidth]{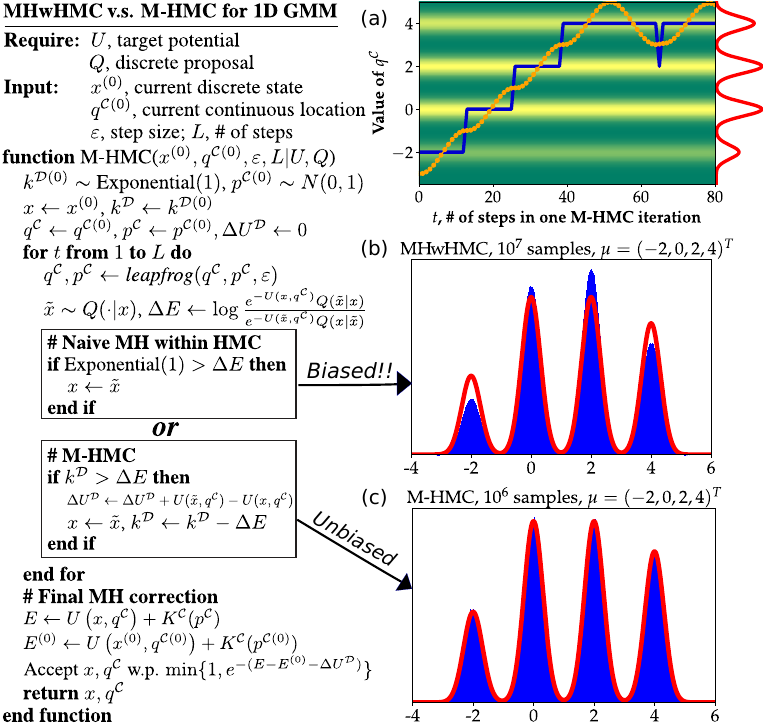}
    \caption{Proposed M-HMC kernel and comparison of MHwHMC and M-HMC on 1D GMM. \textbf{Figure~\ref{fig:mhwhmc_vs_mhmc}(a):} Evolution of $x$ (in the form of $\mu_x$, blue) and $q^\calc$ (orange) in an M-HMC iteration. Background color and red curve visualize model density. \textbf{Figure~\ref{fig:mhwhmc_vs_mhmc}(left):} Comparison of MHwHMC and M-HMC on 1D GMM. \textbf{Figure~\ref{fig:mhwhmc_vs_mhmc}(b)(c):} Samples histograms (blue) and true density (red) for MHwHMC and M-HMC.}
    \label{fig:mhwhmc_vs_mhmc}
    \vspace{-15pt}
\end{wrapfigure}

In this section, we empirically verify the accuracy of M-HMC, and compare the performances of various samplers for GMMs, variable selection in BLR, and CTM. In addition to DHMC and M-HMC, we also compare NUTS (using Numpyro~\cite{Phan2019-vh}, for GMMs), HMC-within-Gibbs (HwG), NUTS-within-Gibbs (NwG, implemented as a compound step in PyMC3~\cite{Salvatier2016}), and specialized Gibbs samplers (adapting~\cite{Polson2013-zt} for variable selection in BLR, and adapting~\cite{Chen2013-ge} for CTM).  Our implementations of DHMC, M-HMC and HwG rely on \textit{JAX}~\cite{jax2018github}. For Gibbs samplers, we combine \textit{NUMBA}~\cite{Siu_Kwan_Lam_Continuum_Analytics_Austin_Texas_undated-fa} with the package \textit{pypolyagamma}\footnote{For efficient sampling from Polya-Gamma distribution. \url{github.com/slinderman/pypolyagamma}}. The exact parameter values for different samplers can be found in the supplementary, and in the code to reproduce the results\footnote{Code available at \url{https://github.com/StannisZhou/mixed\_hmc}}.

For all three models, a common performance measure is the minimum relative effective sample size (MRESS), i.e. the minimum ESS over all dimensions, normalized by the number of samples. We use function \textit{ess} (with default settings) from Python package \textit{arviz}~\cite{arviz_2019} to estimate MRESS. Our MRESS is estimated using multiple independent chains. For discrete updates in HwG and NwG, in addition to the MH updates used in our experiments, we also tried standard particle Gibbs (using \textit{Turing.jl}~\cite{ge2018t}) as suggested by an anonymous reviewer, but were unable to get meaningful results due to numerical accuracy in \textit{Turing.jl} implementations. For M-HMC, we use Gibbs updates $Q_j(\tilde{x}|x) \propto \pi(\tilde{x}, q^{\calc})$ due to their superior empirical performances, and include additional experiments on how M-HMC performs with different proposals in Section 5.3 of supplementary.

\vspace*{-10pt}
\subsection{24D Gaussian Mixture Model (GMM)}
\vspace*{-10pt}
\begin{wrapfigure}{L}{0.5\textwidth}
    \centering
    \includegraphics[width=0.5\textwidth]{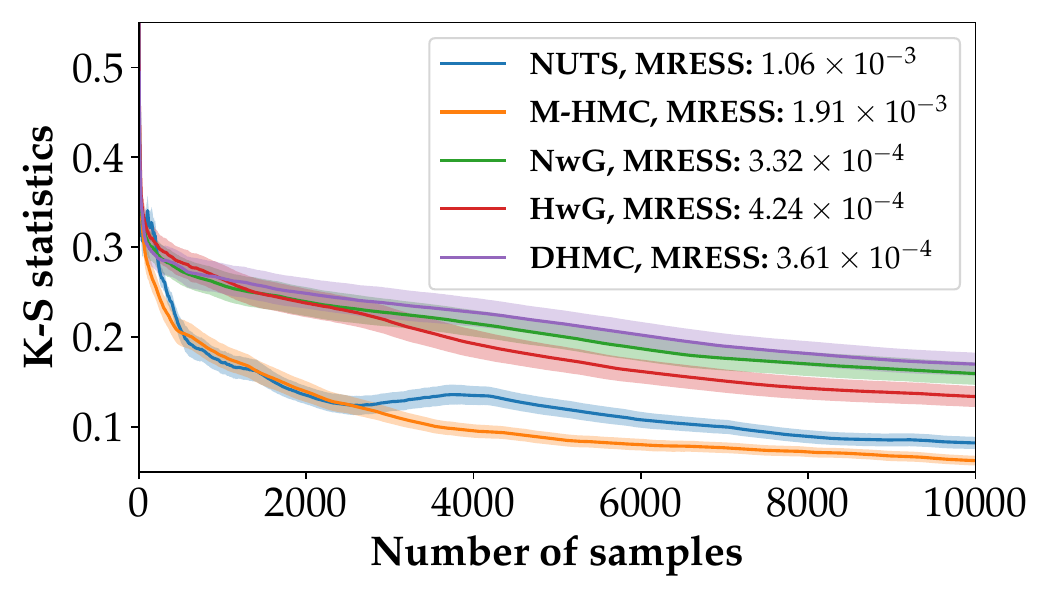}
    \caption{Evolution of K-S statistics of empirical and true samples for $q^\calc_1$, and MRESS for the 24D GMM. Colored regions indicate $95\%$ confidence interval, estimated using 192 independent chains.}
    \label{fig:24d_gmm}
    \vspace{-12pt}
\end{wrapfigure}

We experiment with a more challenging 24D GMM with 4 components. We again use $\phi_1=0.15, \phi_2=\phi_3=0.3, \phi_4=0.25$. To avoid potential intractability because of multimodality, we set $\Sigma=3I$. We use the 24 permutations of $-2, 0, 2, 4$ to specify the means of the 4 components in the 24 dimensions. We test 5 different samplers: NUTS, HwG, NwG, DHMC and M-HMC. NUTS operates on the marginal distribution $\pi(q^\calc)$, and serves to provide an upper bound on the performance. All other samplers operate on the joint distribution $\pi(x, q^\calc)$.

\vspace*{-4pt}

NUTS and NwG require no tuning. We favor HwG and DHMC with a parameter grid search, and tune M-HMC by inspecting short trial runs. For each sampler, we draw $10^4$ burn-in and $10^4$ actual samples in 192 independent chains.

\vspace*{-4pt}

To get a sense of the accuracy of the samplers as well as their convergence speed, we calculate the two-sided Kolmogorov-Smirnov (K-S) statistic\footnote{Calculated using \textit{scipy.stats.ks\_2samp}} of the 24 marginal empirical distributions given by samples from the samplers and the true marginal distributions, averaged over 192 chains. We also calculate the MRESS for $q^\calc$ to measure the efficiency of the different samplers. Figure~\ref{fig:24d_gmm} shows the evolution of the K-S statistic for $q^\calc_1$, with MRESS reported in legends. M-HMC clearly outperforms HwG, NwG and DHMC, and surprisingly also outperforms NUTS\footnote{The NUTS adaption is done via dual averaging, with 0.6 target acceptance probability. Note that if we use the default 0.8 in \textit{NumPyro}, NUTS's MRESS reduces to $8.27\times 10^{-4}$.}, which explicitly integrates out $x$. DHMC and NwG have essentially the same performance, and are slightly outperformed by HwG.

\vspace*{-8pt}
\subsection{Variable Selection in Bayesian Logistic Regression (BLR)}\label{sec:blr}
\vspace*{-6pt}
We consider the logistic regression model $ y_i \sim \text{Bernoulli}\left(\sigma(X_i^T\beta)\right), i=1,\cdots, 100 $
where $X\in\mathbb{R}^{100\times 20}, \beta\in\mathbb{R}^{20}$, and $\sigma(x)=1/(1+e^{-x})$ is the sigmoid function. For our experiments, we generate a set of synthetic data: The $X_i$'s are generated from the multivariate Gaussian $N(0, \Sigma)$, where $\Sigma_{j j} = 3, j=1, \cdots, 20$ and $\Sigma_{j k} = 0.3, \forall j\neq k$. For $\beta$, we set 5 randomly picked components to be $0.5$, and all the other components to be 0. We generate $y_i \sim \text{Bernoulli}\left(\sigma(X_i^T\beta)\right)$. We introduce a set of binary random variables $\gamma_j, j=1, \cdots, 20$ to indicate the presence of components of $\beta$, and put an uninformative prior $N(0, 25I)$ on $\beta$. This results in the following joint distribution on $\beta, \gamma$ and $y$: $p(\beta, \gamma, y) = N(\beta |0, 25I)\prod_{i=1}^{100}p_i^{y_i}(1 - p_i)^{1 - y_i}$ where $p_i=\sigma(\sum_{j=1}^{20}X_{i j}\beta_j\gamma_j), i=1, \cdots, 100$.

\vspace*{-2pt}

We are interested in a sampling-based approach to identify the relevant components of $\beta$. A natural approach~\cite{Dellaportas2000-pa, Zucknick2014-pf} is to sample from the posterior distribution $p(\beta, \gamma|y)$, and inspect the posterior samples of $\gamma$. This constitutes a challenging posterior sampling problem due to the lack of conjugacy and the mixed support, and prevents the wide applicability of this approach. Existing methods typically rely on data-augmentation schemes~\cite{Albert1993-ho, Carlin1995-fz, Holmes2006-xe, Polson2013-zt}. Here we explore applications of HwG, NwG, DHMC and M-HMC to this problem. As a baseline, we implement a specialized Gibbs sampler, by combining the Gibbs sampler in~\cite{Polson2013-zt} for $\beta$ with a single-site systematic scan Gibbs sampler for $\gamma$. 

\vspace*{-2pt}

Gibbs and NwG require no tuning. For HwG and DHMC, we conduct a parameter grid search, and report its best performance. For M-HMC, instead of picking a particular setting, we test its performance on multiple settings, to better understand how different components of M-HMC affect its performance. In particular, we are interested in how performance changes with the number of discrete updates $L$ for a fixed travel time $T$, and with $n_\cald$, the number of discrete variables to update at each discrete update while holding the total numer of single discrete variable updates $n_\cald L$ a constant. For each sampler, we use 192 independent chains, each with 1000 burn-in and 2000 actual samples. 

We check the accuracy of the samplers by looking at their accuracy in terms of percentage of the posterior samples for $\gamma$ that agree exactly with the true model, as well as their average Hamming distance to the true model. All the tested samplers perform similarly, giving about $8.1\%$ accuracy and an average Hamming distance of around $2.2$.  We compare the efficiency of the 5 samplers by measuring MRESS of posterior samples for $\beta$. The results are summarized in Figures~\ref{fig:variable_selection}(a)(b). M-HMC and DHMC both significantly outperform Gibbs, HwG and NwG, demonstrating the benefits of more frequent discrete updates inside HMC. However, we observe a ``U-turn"~\cite{Hoffman2014-so} phenomenon, shown in Figure~\ref{fig:variable_selection_L}, for both $T$ and $L$: increasing $T, L$ results in performance oscillations, suggesting that although M-HMC is capable of making distant proposals, increasing $T, L$ beyond a certain threshold would decrease its efficiency as M-HMC starts to ``double back" on itself. Nevertheless, it's clear that for fixed $T$, increasing $L$ generally improves performance, again demonstrating the benefits of more frequent discrete variables updates. We also observe (Figure~\ref{fig:variable_selection}(b)) that $n_\cald=1$ generally gives the best performance when $n_\cald L$ is held as a constant, suggesting that distributed/more frequent updates of the discrete variables is more beneficial than concentrated/less frequent updates. However, distributed/more frequent updates of discrete variables entail using a large $L$, which can break each leapfrog step into smaller steps, resulting in more (potentially expensive) gradients evaluations.

Although the best DHMC has good performance, we note that its algorithmic structure requires sequential updates of all discrete variables at each leapfrog step. Compared with, e.g. M-HMC with $T=40, L=600, n_\cald=1$, using similar implementations, the best DHMC takes 1.82 times longer with nearly 0.3 reduction in MRESS, demonstrating the superior performance of M-HMC.

\begin{figure*}[t!]
    \centering
    \begin{subfigure}[t]{0.487\textwidth}
        \centering
	\includegraphics[width=\textwidth]{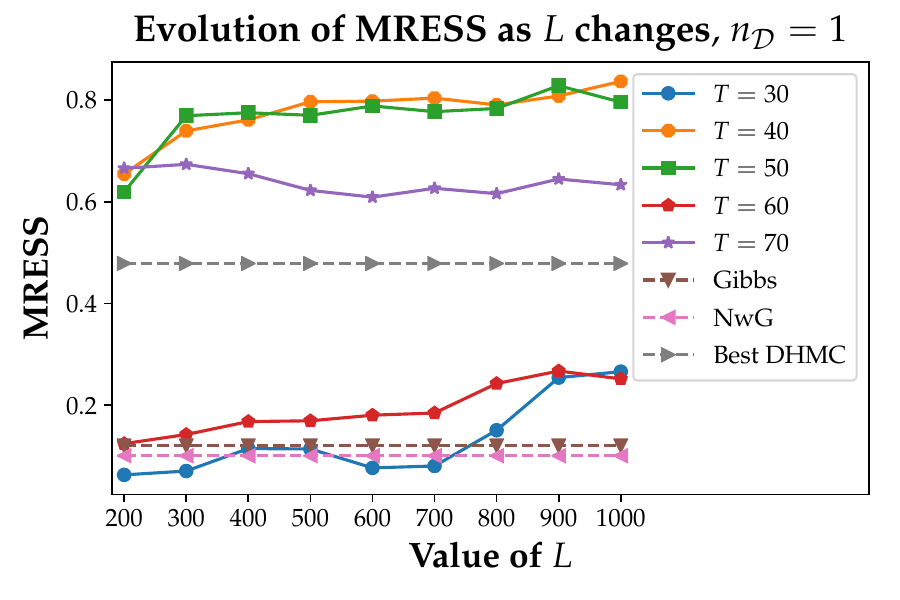}
	\caption{Baseline MRESS for the Gibbs sampler, NwG, and best DHMC, and evolution of MRESS for M-HMC as $L$ changes for different travel time $T$, with $n_\cald=1$}
	\label{fig:variable_selection_L}
    \end{subfigure}%
    \quad
    \begin{subfigure}[t]{0.487\textwidth}
        \centering
	\includegraphics[width=\textwidth]{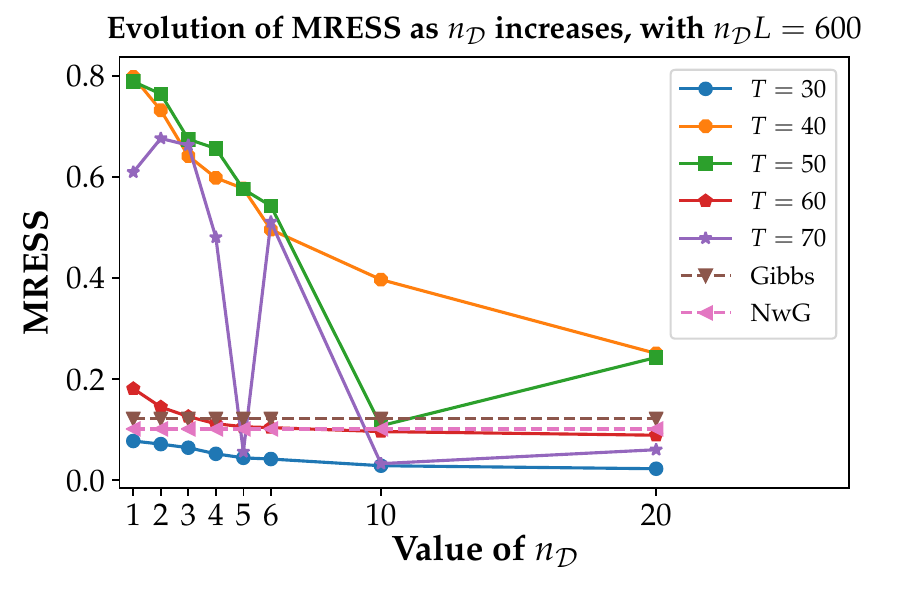}
        \caption{Baseline MRESS for the Gibbs sampler, NwG, and evolution of MRESS for M-HMC as $n_\cald$ increases for different travel time $T$, with $n_\cald L=600$}
	\label{fig:variable_selection_nd}
    \end{subfigure}%
    \caption{Performances (MRESS of posterior samples for $\beta$) of M-HMC as $L$ and $n_\cald$ change on variable selection for BLR, as well as baseline MRESS for the Gibbs sampler, NwG, and best DHMC}
    \label{fig:variable_selection}
    \vspace*{-20pt}
\end{figure*}

\vspace*{-8pt}
\subsection{Correlated Topic Model (CTM)}\label{sec:ctm}
\vspace*{-6pt}
Topic modeling is widely used in the statistical analysis of documents collections. CTM~\cite{Blei2007-tk} is a topic model that extends the popular Latent Dirichlet Allocation (LDA)~\cite{Blei2003-pz} by using a logistic-normal prior to effectively model correlations among different topics. Our setup follows~\cite{Blei2007-tk}: assume we have a CTM modeling $D$ documents with $K$ topics and a $V$-word vocabulary. The $K$ topics are specified by a $K\times V$ matrix $\beta$. The $k$th row $\beta_k$ is a point on the $V-1$ simplex, defining a distribution on the vocabulary. Use $w_{d,n}\in\{1,\cdots, V\}$ to denote the $n$th word in the $d$th document, $z_{d,n}\in\{1, \cdots, K\}$ to denote the topic assignment associated with the word $w_{d,n}$, and use $\text{Categ}(p)$ to denote a categorical distribution with distribution $p$. Define $f:\mathbb{R}^K\rightarrow\mathbb{R}^K$ to be $f_i(\eta) = e^{\eta_i} / \sum_{j=1}^K e^{\eta_k}$. Given the topics $\beta$, a vector $\mu\in\mathbb{R}^K$ and a $K\times K$ covariance matrix $\Sigma$, for the $d$th document with $N_d$ words, CTM first samples $\eta_d\sim N(\mu, \Sigma)$; then for each $n\in\{1, \cdots, N_d\}$, CTM draws topic assignment $z_{d,n} | \eta_d\sim \text{Categ}(f(\eta_d))$, before finally drawing word $w_{d,n} | z_{d,n}, \beta\sim \text{Categ}(\beta_{z_{d,n}})$.

\begin{wrapfigure}{L}{0.7\textwidth}
    \centering
	\includegraphics[width=0.7\textwidth]{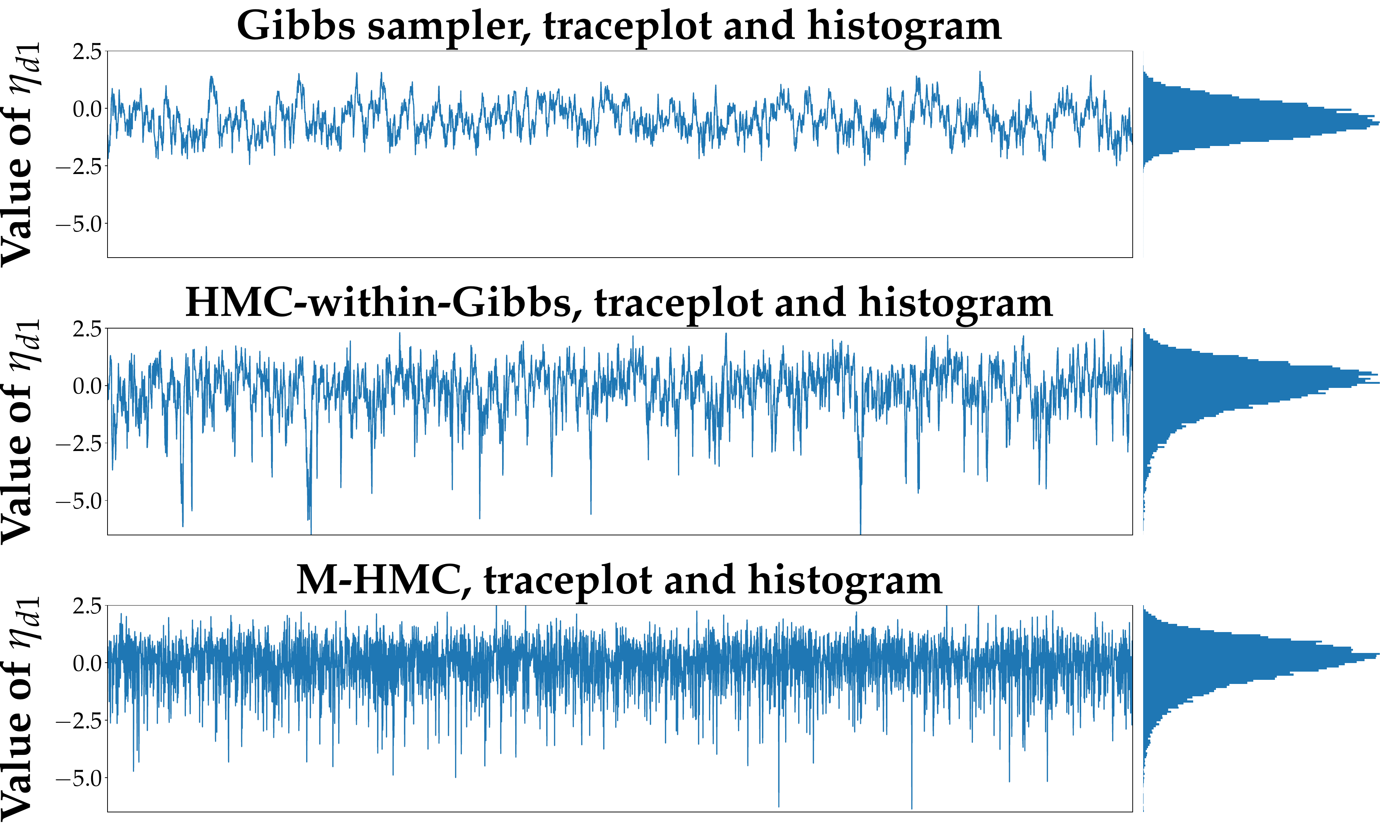}
    \caption{Traceplots and samples histograms of posterior samples of $\eta_{d 1}$ when Gibbs differs from HwG, NwG\&M-HMC in posterior means}
	\label{fig:ctm_gibbs_negative}
    \vspace*{-15pt}
\end{wrapfigure}

While CTM has proved to be a better topic model than LDA~\cite{Blei2007-tk}, its use of the non-conjugate logistic-normal prior makes efficient posterior inference of $p(\eta, z |w; \beta, \mu, \Sigma)$ highly challenging. In~\cite{Blei2007-tk}, the authors resorted variational inference with highly idealized mean-field approximations. There has been efforts on developing more efficient inference methods using a sampling-based approach, e.g. specialized Gibbs samplers~\cite{Mimno2008-mi, Chen2013-ge}. In this section, we explore the applications of HwG, NwG, DHMC and M-HMC to the posterior inference problem $p(\eta, z |w; \beta, \mu, \Sigma)$ in CTM. 

We use the Associated Press (AP) dataset~\cite{Harman1993-df}\footnote{The dataset can be downloaded at \url{http://www.cs.columbia.edu/~blei/lda-c/ap.tgz}}, which consists of 2246 documents. Since we are interested in comparing the performance of different samplers, we train a CTM using \textit{ctm-c}\footnote{\url{https://github.com/blei-lab/ctm-c}}, with the default settings, $K=10$ topics and the given vocabulary of $V=10473$ words. As a baseline, we use the Gibbs sampler developed in~\cite{Chen2013-ge}, which was empirically demonstrated to be highly effective. Note that unlike~\cite{Chen2013-ge}, there's no Dirichlet prior on $\beta$ in our setup; moreover, for $K$ topics, \textit{ctm-c} handles the issue of non-identifiability by using $\eta_d\in\mathbb{R}^{K-1}$ and assuming the first dimension to be 0. Nevertheless, it's straightforward to adapt~\cite{Chen2013-ge} to our setup. After training with \textit{ctm-c}, we apply the 4 different samplers to 20 randomly picked documents for posterior sampling of $z$ and $\eta$. For each sampler, we draw 1000 burn-in and 4000 actual samples in each of 96 independent chains. Gibbs and NwG require no tuning. For HwG and DHMC, we conduct a parameter grid search. For M-HMC, we inspect short trial runs on a separate document, and fix $T, n_\cald$ for all 20 picked documents and set $L=80\times N_d$ for document $d$. Empirically, we find it important to use a non-identity mass matrix for the kinetic energy $K^\calc$ in M-HMC, which we implement by using step size $\frac{4\Sigma_{i i}}{\sum_{j=1}^9\Sigma_{j j}}$ for $\eta_{d,i}$.

We first compare the accuracy of the 5 different samplers, by inspecting the posterior means of $\eta_d$ using samples from the 5 different samplers on the 20 randomly picked documents. Likely due to its inability to generalize to complicated discrete state spaces, the sample means for $\eta_d$ from DHMC differ significantly from the 4 other samplers on all 20 documents. HwG, NwG and M-HMC agree on all 20 documents, while Gibbs agrees ($\pm5\%$ relative error) with them on 17 out of the 20 documents.

On the 17 documents where the 4 samplers agree, we calculate MRESS for $\eta_d$. Without much tunning, M-HMC already shows significant advantages: it has the largest MRESS for all 17 documents, and its MRESS is on average \textbf{57.32} times larger than that of Gibbs, \textbf{8.76} times larger than that of NwG, and \textbf{8.65} times larger than that of HwG. HwG slightly outperforms NwG, with Gibbs performing the worst. Note that Gibbs sequentially updates each component of $z$ and $\eta$, likely causing slow mixing.

We additionally inspect traceplots and samples histograms of posterior samples for $\eta_{d 1}$ on a document where Gibbs disagrees with the other 3 samplers (Figure~\ref{fig:ctm_gibbs_negative}. NwG is excluded since it behaves similarly to HwG but is less efficient). M-HMC clearly mixes the fastest, with HwG also outperforming Gibbs. Moreover, HwG and M-HMC explore the state space much more thoroughly, suggesting that Gibbs gives different posterior means on the 3 documents due to ineffective exploration of the state spaces.

\vspace*{-10pt}
\section{Discussions and Conclusions}\label{sec:discussions}
\vspace*{-10pt}
Numerical experiments in Sections~\ref{sec:demo} and~\ref{sec:experiments} show that:\begin{inparaenum}[(1)]
    \item M-HMC gives accurate samples on all the tested models, while some alternatives occasionally fail (e.g. DHMC in Section~\ref{sec:demo}, and Gibbs and DHMC in Section~\ref{sec:ctm}).
    \item In terms of MRESS, M-HMC is consistently more efficient than HwG, NwG, DHMC and Gibbs, and even matches NUTS for 24D GMM.
    \item As shown in Section~\ref{sec:blr}, M-HMC's performance is sensitive to parameter choices, similar to regular HMC. This makes automatically picking the parameters (e.g. in a NUTS-like way) an important future direction.
\end{inparaenum}

Overall, M-HMC provides a generally applicable mechanism that can be easily implemented to make more frequent updates of discrete variables within HMC. Such updates are usually inexpensive (when compared to gradients evaluations) yet highly beneficial as shown in our numerical experiments in Section~\ref{sec:experiments}. This makes M-HMC an appealing option for probabilistic models with mixed support.

\clearpage

\section*{Broader Impact}

Probabilistic modeling with structured models leads to more interpretable modeling of data and proper uncertainty quantification. M-HMC enables efficient inference for probabilistic models with mixed support, allowing applicability of probabilistic modeling to a broader set of problems. This can contribute to more principled and interpretable decision making process based on probabilistic modeling of data. As with any technology, negative consequences are possible but difficult to predict at this time. This is not a deployed system with immediate failure consequences or that can leverage potentially harmful biases.

\begin{ack}
    The author would like to thank Stuart Geman for providing the initial spark for this work and many helpful discussions, an anonymous reviewer at NeurIPS 2019 for suggesting to extend the framework from the discrete-only case to the mixed discrete and continuous case, Nishad Gothoskar for suggesting the name M-HMC, Rajeev Rikhye for valuable help in improving the figures and poster for the paper, and Du Phan for the help in correcting a mistake in the MH correction term. This work was partially supported by the National Science Foundation under Grant No. DMS-1439786 while the author was in residence at the Institute for Computational and Experimental Research in Mathematics in Providence, RI, during the Spring 2019 semester, and by Vicarious AI.
\end{ack}

\appendix
\renewcommand\thesection{\arabic{section}}
\setcounter{theorem}{0}
\setcounter{algorithm}{0}
\begin{center}
    \LARGE Supplement for ``Mixed Hamiltonian Monte Carlo for Mixed Discrete and Continuous Variables''
\end{center}

\section{Algorithm and theory}

\subsection{Detailed description of a full M-HMC iteration}

See Algorithm~\ref{alg:abstract} for a detailed description of a full M-HMC iteration.

\begin{algorithm}[ht!]
\caption{Core step of M-HMC}
\label{alg:abstract}
\begin{algorithmic}[1]
	\REQUIRE{$U$, potential for the target distribution $\pi$; $Q_i, i = 1, \ldots, N_\cald$, single-site proposals; $k^\cald$, kinetic energy for discrete component; $I(\cdot, \cdot, \cdot | x, U, K^\calc)$, reversible and volume-preserving integrator for continuous component; $\tau$, interval length in $\mathbb{T}^{N_\cald}$}
\INPUT{$x^{(0)}$, discrete state; $q^{\cald(0)}, p^{\cald(0)}$, auxiliary location and momentum for discrete state; $q^{\calc(0)}$, continuous location; $p^{\calc(0)}$, auxiliary momentum for continuous state; $T$, travel time}
\OUTPUT{$x$, next discrete state; $q^{\cald}, p^{\cald}$, next auxiliary location and momentum for discrete state; $q^{\calc}$, next continuous location; $p^{\calc}$, next auxiliary momentum for continuous state}

\FUNCTION{M-HMC($x^{(0)}, q^{\cald(0)}, p^{\cald(0)}, q^{\calc(0)}, p^{\calc(0)}, T$)}
\STATE $x \leftarrow x^{(0)}$, $q^\cald \leftarrow q^{\cald(0)}$, $p^\cald \leftarrow p^{\cald(0)}$
\STATE $q^\calc \leftarrow q^{\calc(0)}$, $p^\calc \leftarrow p^{\calc(0)}, \Delta U^\cald \leftarrow 0$
\STATE $v_i \leftarrow (k^{\cald})' (p^\cald_i), i = 1, \ldots, N_\cald$
\STATE $t_i \leftarrow \frac{\tau(\text{sign} (v_i) + 1) - 2 q^\cald_i}{2 v_i}, i = 1, \ldots, N_\cald$
  \WHILE{$T > 0$}
  \STATE $j \leftarrow \text{argmin}_i \{ t_i, i = 1, \ldots, N_\cald \}$
  \STATE $\varepsilon = \min \{ t_j, T \}$ 
    \STATE $q^\cald_i \leftarrow q^\cald_i + \varepsilon v_i, i = 1, \ldots, N_\cald$
    \STATE ($q^\calc, p^\calc) \leftarrow I(q^\calc, p^\calc, \varepsilon |x, U, K^\calc)$
    \STATE $T \leftarrow T - \varepsilon$
    \IF{$\varepsilon = t_j$}
      \STATE $t_i \leftarrow t_i - t_j, i = 1, \ldots, N_\cald$
      \STATE $\tilde{x} \sim Q_j (\cdot |x)$
      \STATE $\Delta E \leftarrow \log\frac{e^{-U(x, q^\calc)} Q_j (\tilde{x} |x)}{e^{-U(\tilde{x}, q^\calc)} Q_j (x| \tilde{x})}$ 
      \IF{$k^{\cald} (p^\cald_j) > \Delta E$}
      \STATE $\Delta U^\cald \leftarrow \Delta U^\cald + U(\tilde{x}, q^\calc) - U(x, q^\calc)$
      \STATE $x\leftarrow \tilde{x}, q_j^\cald \leftarrow \tau - q_j^\cald$
      \STATE $p_j^\cald \leftarrow\text{sign}(p^\cald_j)(k^\cald)^{-1}(k^\cald(p^\cald_j)-\Delta E)$
      \STATE $v_j\leftarrow(k^\cald)'(p^\cald_j)$
      \ELSE
        \STATE $p^\cald_j\leftarrow -p^\cald_j, v_j\leftarrow -v_j$
      \ENDIF
      \STATE $t_j \leftarrow \frac{\tau(\text{sign} (v_j) + 1) - 2 q^\cald_j}{2 v_j}$
    \ENDIF
  \ENDWHILE
  \STATE $E=U\left(x, q^\calc\right) + K^\calc(p^\calc)$
  \STATE $E^{(0)}=U\left(x^{(0)}, q^{\calc(0)}\right) + K^\calc(p^{\cald(0)})$
  \IF{$\text{Uniform}([0, 1]) < e^{-(E - E^{(0)} - \Delta U^\cald)}$}
  \STATE $p^\cald\leftarrow -p^\cald, p^\calc \leftarrow -p^\calc$
  \ELSE
\STATE $x \leftarrow x^{(0)}$, $q^\cald \leftarrow q^{\cald(0)}$, $p^\cald \leftarrow p^{\cald(0)}$
\STATE $q^\calc \leftarrow q^{\calc(0)}$, $p^\calc \leftarrow p^{\calc(0)}$
  \ENDIF
  \STATE \textbf{return} $x, q^\cald, p^\cald, q^\calc, p^\calc$
\ENDFUNCTION
\end{algorithmic}
\end{algorithm}

\subsection{Proof of Theorem 1}

\subsubsection{Proof of the Theorem}

\begin{theorem}\label{thm:detailed_balance}
	{\textbf{(Detailed Balance)}} The \textit{M-HMC} function in Algorithm~\ref{alg:abstract} satisfies detailed balance w.r.t.\ the joint invariant distribution $\varphi $, i.e.\ for any measurable sets $A, B \subset \Omega\times\Sigma$,
\begin{equation*}
 \int_{\Sigma} \sum_{x \in A (\Theta)} R_T ((x, \Theta), B) \varphi ((x, \Theta))  \mathrm{d} \Theta
 = \int_{\Sigma} \sum_{x \in B (\Theta)} R_T ((x, \Theta), A) \varphi ((x, \Theta))  \mathrm{d}\Theta
\end{equation*}

\end{theorem}

\begin{proof}
    Use $s=(x, q^\cald, p^\cald, q^\calc, p^\calc)$ and $s'=(x', q^{\cald\prime}, p^{\cald\prime}, q^{\calc\prime}, p^{\calc\prime})$ to denote two points in $\Omega\times\Sigma$. 
\subsubsection*{Sequence of proposals and probabilistic paths}
	If we start from $s\in\Omega\times\Sigma$, for a given travel time $T$, a concrete run of the \textit{M-HMC} function would involve a finite sequence of random proposals. Assume the length of the sequence is $M$. The sequence of random proposals $Y$ can be denoted as
  \[ Y = (y^{(0)}, y^{(1)}, \ldots, y^{(M - 1)}), y^{(m)} \in \Omega, m = 0, \ldots, M - 1 \]
  This sequence of proposals indicates that, for this particular run of \textit{M-HMC}, we reach 0 or $\tau$ at individual sites $M$ times, and each time the system makes a proposal to go to the discrete state $y^{(m)}\in\Omega, m=0, \cdots, M-1$ from the current discrete state. 

 If we fix $Y$, the \textit{M-HMC} function (without the final accept/reject step) in fact specifies a deterministic mapping, and would map $s$ to a single point $s'\in\Omega\times\Sigma$. For each such sequence of proposals $Y$, we introduce an associated probabilistic path $\omega(s, T, Y)$, which contains all the information of the system going from $s$ to $s'$ in time $T$ through the function \textit{M-HMC}. Formally, $\omega(s, T, Y)$ is specified by

\begin{itemize}
  \item The sequence of random proposals $Y$
  \[ Y = (y^{(0)}, y^{(1)}, \ldots, y^{(M - 1)}), y^{(m)} \in \Omega, m = 0, \ldots, M - 1 \] 
  \item The indices of the sites for the $M$ site visitations $j^{(0)}, j^{(1)}, \ldots, j^{(M - 1)}\in\{1, \dots, N_\cald\}$
  \item The times of the $M$ site visitations $0 \leqslant t^{(0)} < t^{(1)}< \ldots < t^{(M - 1)} \leqslant T$
  \item The discrete states of the system at $M$ site visitations $x = x^{(0)}, x^{(1)}, \ldots, x^{(M - 1)}\in\Omega$
  \item Accept/reject decisions for the $M$ site visitations $a^{(m)} = \mathbbm{1}_{\{y^{(m)} = x^{(m + 1)} \}}$, where $x^{(M)} = x'$
  \item The evolution of the location variables $q^\cald (t), q^\calc(t)$ and the momentum variables $p^\cald(t), p^\calc(t), 0 \leqslant t \leqslant T$. Note that we might have discontinuities in $p^\cald (t)$. We use $p^\cald (t^-)$ to denote the left limit and $p^\cald (t^+)$ to denote the right limit.
\end{itemize}

\subsubsection*{Countable number of probabilistic paths and decomposition of $R_T(s, B)$}

In order for a probabilistic path $\omega(s, T, Y)$ to be valid, the different components of $\omega(s, T, Y)$ have to interact with each other in a way as determined 
by the \textit{M-HMC} function. For example, we should have \ $y^{(m)}_i
= x^{(m)}_i, \forall i \neq j^{(m)}$ and
\[ x^{(m + 1)} = \left\{\begin{array}{ll}
		y^{(m)} & \text{if } k^\cald (p^\cald (t^{(m) -})) > \log \frac{\pi (x^{(m)}, q^{\calc}(t^{(m)}))
			Q_{j^{(m)}} (y^{(m)} |x^{(m)})}{\pi (y^{(m)}, q^{\calc}(t^{(m)})) Q_{j^{(m)}} (x^{(m)}
     |y^{(m)})}\\
     x^{(m)} & \text{otherwise}
   \end{array}\right. \]

For $s\in\Omega\times\Sigma$ and some given travel time $T$, we say a sequence of proposals $Y$ is compatible with $s, T$ and \textit{M-HMC} if we can find a corresponding probabilistic path $\omega(s, T, Y)$ that's valid. 

Not all sequences of proposals correspond to valid probabilistic paths. But even if we don't consider the compatibility of the sequence of proposals with $s, T$ and \textit{M-HMC}, the set of all possible such sequences has only a countable number of elements. This is because we only need to look at sequences of finite length (because of the fixed travel time $T$), and all the individual proposals are on discrete state spaces with a finite number of states.

The above analysis indicates that for some starting point $s\in\Omega\times\Sigma$ and travel time $T$, running the \textit{M-HMC} function would result in only a countable number of possible destinations $s'$. Furthermore, $\forall s, s'\in\Omega\times\Sigma$ for which $R_T(s, \{ s' \}) > 0$, there are at most a countable number of probabilistic paths which bring $s$ to $s'$ in time $T$ through \textit{M-HMC}.

Formally, given some travel time $T$ and a sequence of proposals $Y$, define
\begin{equation*}
	\mathcal{D}(T, Y) = \{s\in \Omega\times\Sigma: Y\text{ is compatible with }s, T\text{ and }\textit{M-HMC}\}
\end{equation*}
Use $\mathcal{T}_{T, Y} : \mathcal{D}(T, Y)\rightarrow \Omega\times\Sigma$ to denote the deterministic mapping defined by \textit{M-HMC} (without the final accept/reject step) for the given $Y$ in time $T$ (so that $\mathcal{D}(T, Y)$ represents the domain of the mapping $\mathcal{T}_{T, Y}$), and use
\begin{equation*}
	\mathcal{I}(T, Y) = \{s'\in\Omega\times\Sigma: \exists s\in \mathcal{D}(T, Y), s.t. \mathcal{T}_{T, Y}(s) = s'\}
\end{equation*}
to denote the image of the mapping $\mathcal{T}_{T, Y}$. For a given $x\in\Omega$, use
$$\mathcal{T}_{T, Y, x}: \{(q^\cald, p^\cald, q^\calc, p^\calc)\in\Sigma: s=(x, q^\cald, p^\cald, q^\calc, p^\calc)\in\mathcal{D}(T, Y)\} \rightarrow \Sigma$$
to denote the deterministic mapping induced by $\mathcal{T}_{T, Y}$ on $\Sigma$. In other words, $$\forall s=(x, q^\cald, p^\cald, q^\calc, p^\calc)\in\mathcal{D}(T, Y), \mathcal{T}_{T, Y, x}((q^\cald, p^\cald, q^\calc, p^\calc)) = (q^{\cald\prime}, p^{\cald\prime}, q^{\calc\prime}, p^{\calc\prime})$$ where $s'=(x', q^{\cald\prime}, p^{\cald\prime}, q^{\calc\prime}, p^{\calc\prime})=\mathcal{T}_{T, Y}(s)$.
Define
\begin{equation*}
    (\Omega\times\Sigma)(s, T) = \{ s' = (x', q^{\cald\prime}, p^{\cald\prime}, q^{\calc\prime}, p^{\calc\prime}) \in \Omega\times\Sigma : R_T (s, \{ s' \}) > 0 \}
\end{equation*}
$\forall s, s'\in\Omega\times\Sigma$ for which $R_T(s, \{ s' \}) > 0$, further define
\begin{equation*}
	\mathcal{P}(s, s', T) = \{Y\text{ a sequence of proposals: } s\in\mathcal{D}(T, Y)\text{ and }\mathcal{T}_{T, Y}(s) = s'\}
\end{equation*}
Then both $(\Omega\times\Sigma)(s, T)$ and $\mathcal{P}(s, s', T)$ have at most a countable number of elements.

\subsubsection*{Proof of detailed balance}
First, we note that it's trivially true that
\begin{equation}\label{eqn:trivial}
\varphi(s)R_T(s, \{s\}) = \varphi(s) R_T(s, \{s\})
\end{equation}

Next, we consider $s'\neq s$. For a given travel time $T$ and a sequence of proposals $Y$, $\forall s\in\mathcal{D}(T, Y)$, we use $r_{T, Y}(s, s')$ to denote the probability of going from $s$ to $s'$ through the probabilistic path $\omega(s, T, Y)$. Since \textit{M-HMC} (without the final accept/reject step) defines a deterministic mapping $\mathcal{T}_{T, Y}$ for given $T$ and $Y$, considering all $s'\neq s$, the only non-zero term is $r_{T, Y}(s, \mathcal{T}_{T, Y}(s))$ . For all $s'\neq s, \mathcal{T}_{T, Y}(s)$, we have $r_{T, Y}(s, s')=0$.

Using the above notation, $\forall s\in A$ and $B\subset \Omega\times\Sigma$ measurable for which $s\notin B$, we can write $R_T(s, B)$ as
\begin{align*}
	R_T (s, B) &= \sum_{s' \in B\cap(\Omega\times\Sigma)(s, T)} R_T (s, \{ s' \})\\
		   &= \sum_{s' \in B\cap(\Omega\times\Sigma)(s, T)}\sum_{Y\in\mathcal{P}(s, s', T)}r_{T, Y}(s, s')\\
		   &= \sum_{s' \in B\cap(\Omega\times\Sigma)(s, T)}\sum_{Y\in\mathcal{P}(s, s', T)}r_{T, Y}(s, \mathcal{T}_{T, Y}(s))
\end{align*}

For a given travel time $T$, $\forall s, s'\in \Omega\times\Sigma, s\neq s'$, if $R_T(s, \{ s' \}) > 0$, then $\mathcal{P}(s, s', T)\neq\emptyset$. In Lemma \ref{lem:prob_path}, we prove that $\forall Y\in\mathcal{P}(s, s', T)$, the absolute value of the determinant of the Jacobian of $\mathcal{T}_{T, Y, x}$ is $|\text{det}\mathcal{J}\mathcal{T}_{T, Y, x}| = 1$, for all $x\in\Omega$. Furthermore, the deterministic mapping $\mathcal{T}_{T, Y}$ is reversible, and there exists a sequence of proposals $\tilde{Y}\in\mathcal{P}(s', s, T)$, s.t. $s = \mathcal{T}_{T, Y}^{-1}(s') = \mathcal{T}_{T, \tilde{Y}}(s')$.

In Lemma \ref{lem:detailed_balance_singleton}, we prove that, $\forall s'=\mathcal{T}_{T, Y}(s)\neq s$,
\begin{equation*}
	\varphi (s) r_{T, Y} (s, s') = \varphi(s)r_{T, Y}(s, \mathcal{T}_{T, Y}(s)) = \varphi (s') r_{T, \tilde{Y}} (s', \mathcal{T}_{T, \tilde{Y}}(s')) = \varphi (s') r_{T, \tilde{Y}} (s', s)
\end{equation*}

Using the above results, it's not hard to see that, for the case where $A\cap B = \emptyset$,
\begin{align*}
   & \int_{\Sigma} \sum_{x \in A (\Theta)} R_T (s, B) \varphi (s) \mathrm{d} \Theta\\
	= & \int_{\Sigma}\sum_{x \in A (\Theta)} \sum_{s' \in B\cap(\Omega\times\Sigma)(s, T)}\sum_{Y\in\mathcal{P}(s, s', T)} r_{T, Y} (s, s') \varphi (s) \mathrm{d} \Theta\\
	= & \int_{\Sigma} \sum_{x \in A (\Theta)} \sum_{s' \in B \cap(\Omega\times\Sigma)(s, T)} \sum_{Y\in\mathcal{P}(s, s', T)} r_{T, \tilde{Y}} (s', s) \varphi (s')  \mathrm{d} \Theta\\
	\stackrel{\text{change of variables}}{=} & \int_{\Sigma}\sum_{x' \in B (\Theta')} \sum_{s \in A \cap(\Omega\times\Sigma)(s', T)} \sum_{\tilde{Y}\in\mathcal{P}(s', s, T)} r_{T, \tilde{Y}} (s', s) \varphi (s') \frac{1}{|det \mathcal{J}\mathcal{T}_{T, Y, x}|}\mathrm{d} \Theta'\\
   = & \int_{\Theta} \sum_{x' \in B (\Theta')} R_T (s', A) \varphi (s') \mathrm{d} \Theta' 
\end{align*}
Combining the above reasoning with Equation \ref{eqn:trivial}, the same result can be established for the case where $A\cap B \neq \emptyset$. This proves the desired detailed balance property of \textit{M-HMC} w.r.t. $\varphi$
\begin{equation*}
 \int_{\Sigma} \sum_{x \in A (\Theta)} R_T ((x, \Theta), B) \varphi ((x, \Theta))  \mathrm{d} \Theta
 = \int_{\Sigma} \sum_{x \in B (\Theta)} R_T ((x, \Theta), A) \varphi ((x, \Theta))  \mathrm{d}\Theta
\end{equation*}
\end{proof}

\subsubsection{Useful Lemmas}

In this section, we prove a few useful lemmas to complete the proof of Theorem~\ref{thm:detailed_balance}. W.l.o.g. we assume $\tau=1$ in this section. The proof can be trivially modified to be applicable to arbitrary $\tau$.

First, we prove two lemmas, similar to Lemma 1 and Lemma 2 in Section 5.1 of \cite{Mohasel_Afshar2015-uh}.

\begin{lemma}\label{lem:refraction} {\textbf{(Refraction)}} Let $\mathcal{T} : \mathbb{T}^{N_\cald} \times \mathbb{R}^{N_\cald} \rightarrow \mathbb{T}^{N_\cald} \times \mathbb{R}^{N_\cald}$ be a transformation in $\mathbb{T}^{N_\cald}$ that takes a unit mass located at $q^\cald = (q^\cald_1, \ldots, q^\cald_{N_\cald})$ and moves it with constant velocity $v = ((k^\cald)' (p^\cald_1), \ldots, (k^\cald)' (p^\cald_{N_\cald}))$. Assume it reaches 0 or 1 at site $j$ first. Subsequently $q^\cald_j$ is changed to $1-q^\cald_j$, and $p^\cald_j$ is changed to $\text{sign} (p^\cald_j) (k^\cald)^{- 1} (k^\cald (p^\cald_j) - \Delta E)$ (where $\Delta E$ is a constant and satisfies $\Delta E < k^\cald (p^\cald_j)$). The move is carried on, with the velocity $v_j$ changed to $(k^\cald)'(\text{sign} (p^\cald_j) (k^\cald)^{- 1} (k^\cald (p^\cald_j) - \Delta E))$, for the total time period $\mu$ till it ends in location $q^{\cald\prime}$ and momentum $p^{\cald\prime}$, before it reaches 0 or 1 again at any sites. Then $\mathcal{T}$ is volume preserving, i.e. the absolute value of the determinant of its Jacobian $|\text{det} \mathcal{J} \mathcal{T} | = 1$. 
\end{lemma}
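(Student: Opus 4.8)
The plan is to write $\mathcal{T}$ in closed form as a composition of three stages --- free flight up to the first boundary hit, the refraction jump at site $j$, and free flight away from the boundary --- and then read the Jacobian off directly, avoiding any detour through the intermediate boundary hypersurface. Writing $v_i=(k^\cald)'(p^\cald_i)$, let $t_1$ be the (state-dependent) time at which $q^\cald_j$ first reaches the boundary $b\in\{0,1\}$, so that $t_1=(b-q^\cald_j)/v_j$, and let $v_j'=(k^\cald)'(p_j^{\cald\prime})$ be the velocity after refraction, where $p_j^{\cald\prime}=\text{sign}(p^\cald_j)(k^\cald)^{-1}(k^\cald(p^\cald_j)-\Delta E)$. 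The crucial bookkeeping observation is that only $q^\cald_j,p^\cald_j$ are touched by the jump, so for every $i\neq j$ the coordinate simply drifts freely for the whole period $\mu$, giving $q_i^{\cald\prime}=q^\cald_i+\mu v_i$ and $p_i^{\cald\prime}=p^\cald_i$; while for site $j$ one obtains $q_j^{\cald\prime}=(1-b)+(\mu-t_1)v_j'$ together with $p_j^{\cald\prime}$ as above.

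From this closed form I would argue that the Jacobian decouples. Ordering the variables as $(q^\cald_{-j},p^\cald_{-j},q^\cald_j,p^\cald_j)$, none of $q_i^{\cald\prime},p_i^{\cald\prime}$ for $i\neq j$ depend on $(q^\cald_j,p^\cald_j)$, and neither $q_j^{\cald\prime}$ nor $p_j^{\cald\prime}$ depends on $(q^\cald_{-j},p^\cald_{-j})$. Hence $\mathcal{J}\mathcal{T}$ is block diagonal, with the $(q^\cald_{-j},p^\cald_{-j})$ block the free-flight shear $\left(\begin{smallmatrix} I & D\\ 0 & I\end{smallmatrix}\right)$, where $D=\mathrm{diag}(\mu (k^\cald)''(p^\cald_i))$, whose determinant is $1$, and the remaining factor a $2\times2$ block in $(q^\cald_j,p^\cald_j)$. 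So the whole problem reduces to that single $2\times2$ block.

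The heart of the argument --- and the step I expect to be the main obstacle --- is evaluating the two diagonal entries of that block and seeing them cancel. Since $q_j^{\cald\prime}$ depends on $q^\cald_j$ only through $t_1$, with $\partial t_1/\partial q^\cald_j=-1/v_j$, I get $\partial q_j^{\cald\prime}/\partial q^\cald_j=v_j'/v_j$: the state-dependence of the hitting time is exactly what produces this factor. For the momentum, differentiating $k^\cald(p_j^{\cald\prime})=k^\cald(p^\cald_j)-\Delta E$ (which follows from the refraction formula, using $\Delta E<k^\cald(p^\cald_j)$ so that $(k^\cald)^{-1}$ is applied to a positive value) and invoking the inverse-function theorem give $\partial p_j^{\cald\prime}/\partial p^\cald_j=(k^\cald)'(p^\cald_j)/(k^\cald)'(p_j^{\cald\prime})=v_j/v_j'$. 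The $2\times2$ block is therefore upper triangular with diagonal $(v_j'/v_j,\,v_j/v_j')$, so its determinant is $1$; combined with the shear block this yields $\det\mathcal{J}\mathcal{T}=1$, hence $|\det\mathcal{J}\mathcal{T}|=1$. Note that the replacement $q^\cald_j\mapsto 1-q^\cald_j$ only relabels the glued boundary point of $\mathbb{T}^{N_\cald}$ and contributes nothing to the Jacobian. This mirrors, and slightly generalizes to arbitrary $k^\cald$, Lemma~2 of Afshar--Domke; the only genuinely delicate point is making the cancellation between the hitting-time factor $v_j'/v_j$ and the momentum-rescaling factor $v_j/v_j'$ rigorous, which is why I prefer composing the three stages into one closed-form map rather than multiplying three separate Jacobians across the boundary surface.
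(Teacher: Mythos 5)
Your proof is correct and follows essentially the same route as the paper's: reduce to the $2\times 2$ Jacobian block in $(q^\cald_j,p^\cald_j)$, observe it is upper triangular since $p_j^{\cald\prime}$ does not depend on $q^\cald_j$, and cancel the hitting-time factor $(k^\cald)'(p_j^{\cald\prime})/(k^\cald)'(p^\cald_j)$ against the momentum-rescaling factor $(k^\cald)'(p^\cald_j)/(k^\cald)'(p_j^{\cald\prime})$. The only difference is that you spell out the block-diagonal reduction for the sites $i\neq j$, which the paper delegates to the argument of Afshar--Domke.
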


\begin{proof}
Following the same argument as in the proof of Lemma 1 of \cite{Mohasel_Afshar2015-uh}, we have
\[ 
	|\text{det}\mathcal{J} \mathcal{T}| = \left| \text{det}\left(\begin{array}{cc}
            \frac{\partial q^{\cald_j\prime}}{\partial q^\cald_j} & \frac{\partial q^{\cald_j\prime}}{\partial p^\cald_j}\\
            \frac{\partial p^{\cald_j\prime}}{\partial q^\cald_j} & \frac{\partial p^{\cald_j\prime}}{\partial p^\cald_j}
   \end{array}\right) \right|
\]
If we define $t_j = \frac{\text{sign} (v_j) + 1 - 2 q^\cald_j}{2 (k^\cald)' (p^\cald_j)} = \frac{\text{sign} (p^\cald_j) + 1 - 2 q^\cald_j}{2 (k^\cald)' (p^\cald_j)}$, then
\begin{eqnarray*}
    p^{\cald_j\prime} & = & \text{sign} (p^\cald_j) (k^\cald)^{- 1} (k^\cald (p^\cald_j) - \Delta E)\\
    q^{\cald_j\prime} & = & \frac{1 - \text{sign}(p^\cald_j)}{2} + (k^\cald)' (p^{\cald_j\prime}) (\mu - t_j)\\
                      & = & \frac{1 - \text{sign}(p^\cald_j)}{2} + (k^\cald)' (p^{\cald_j\prime}) \left( \mu - \frac{\text{sign} (p^\cald_j) + 1 - 2 q^\cald_j}{2 (k^\cald)' (p^\cald_j)} \right)
\end{eqnarray*}
This implies
\begin{align*}
	| \text{det}\mathcal{J} \mathcal{T} | &= \left| \text{det} \left(\begin{array}{cc}
            \frac{\partial q^{\cald_j\prime}}{\partial q^\cald_j} & \frac{\partial q^{\cald_j\prime}}{\partial p^\cald_j}\\
            \frac{\partial p^{\cald_j\prime}}{\partial q^\cald_j} & \frac{\partial p^{\cald_j\prime}}{\partial p^\cald_j}
			\end{array}\right) \right| = \left|\text{det} \left(\begin{array}{cc}
            \frac{\partial q^{\cald_j\prime}}{\partial q^\cald_j} & \frac{\partial q^{\cald_j\prime}}{\partial p^\cald_j}\\
            0 & \frac{\partial p^{\cald_j\prime}}{\partial p^\cald_j}
   \end{array}\right) \right| \\
                                          &= \left| \frac{\partial q^{\cald_j\prime}}{\partial q^\cald_j}
                                          \frac{\partial p^{\cald_j\prime}}{\partial p^\cald_j} \right| = \left| \frac{(k^\cald)'
                                          (p^{\cald_j\prime})}{(k^\cald)' (p^\cald_j)} \frac{(k^\cald)' (p^\cald_j)}{(k^\cald)' (p^{\cald_j\prime})}
			\right| = 1
\end{align*}
\end{proof}

\begin{lemma}\label{lem:reflection}  {\textbf{(Reflection)}}
    Let $\mathcal{T} : \mathbb{T}^{N_\cald} \times \mathbb{R}^{N_\cald} \rightarrow \mathbb{T}^{N_\cald} \times \mathbb{R}^{N_\cald}$ be a transformation in $\mathbb{T}^{N_\cald}$ that takes a unit mass located at $q^\cald = (q^\cald_1, \ldots, q^\cald_N)$ and moves it with constant velocity $v = ((k^\cald)' (p^\cald_1), \ldots, (k^\cald)' (p^\cald_{N_\cald}))$. Assume it reaches 0 or 1 at site $j$ first. Subsequently $p^\cald_j$ is changed to $- p^\cald_j$. The move is carried on, with the velocity $v_j$ changed to $-v_j$, for the total time period $\mu$ till it ends in location $q^{\cald\prime}$ and momentum $p^{\cald\prime}$, before it reaches 0 or 1 at any sites again. Then $\mathcal{T}$ is volume preserving, i.e. the absolute value of the determinant of its Jacobian $| \text{det}\mathcal{J} \mathcal{T} | = 1$.
\end{lemma}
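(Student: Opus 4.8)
The plan is to mirror the structure of the refraction lemma (Lemma~\ref{lem:refraction}), since reflection is the strictly simpler case: here $p^\cald_j$ is merely negated rather than rescaled through $(k^\cald)^{-1}$, so the kinetic energy $k^\cald(p^\cald_j)$ is exactly preserved and the velocity $v_j$ simply flips sign. First I would observe that $\mathcal{T}$ acts as the identity on every coordinate $i \neq j$ (those coordinates just translate with constant velocity, contributing a unit Jacobian block), so by the same block-triangular reduction used in Lemma~\ref{lem:refraction} it suffices to compute the $2\times 2$ Jacobian in the $(q^\cald_j, p^\cald_j)$ variables, namely
\[
|\text{det}\mathcal{J}\mathcal{T}| = \left|\text{det}\begin{pmatrix} \frac{\partial q^{\cald_j\prime}}{\partial q^\cald_j} & \frac{\partial q^{\cald_j\prime}}{\partial p^\cald_j}\\ \frac{\partial p^{\cald_j\prime}}{\partial q^\cald_j} & \frac{\partial p^{\cald_j\prime}}{\partial p^\cald_j}\end{pmatrix}\right|.
\]

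Next I would write down the explicit post-reflection expressions. Since reflection sets $p^{\cald_j\prime} = -p^\cald_j$, the bottom row is immediate: $\frac{\partial p^{\cald_j\prime}}{\partial q^\cald_j} = 0$ and $\frac{\partial p^{\cald_j\prime}}{\partial p^\cald_j} = -1$. For the location, defining $t_j = \frac{\text{sign}(p^\cald_j) + 1 - 2q^\cald_j}{2(k^\cald)'(p^\cald_j)}$ as the time to hit the boundary, after reflection the mass re-enters from the same boundary point $\frac{1+\text{sign}(p^\cald_j)}{2}$ (it reaches $0$ or $1$, then reverses) and travels with velocity $(k^\cald)'(-p^\cald_j) = -(k^\cald)'(p^\cald_j)$ for the remaining time $\mu - t_j$, giving an expression for $q^{\cald_j\prime}$ analogous to the one in Lemma~\ref{lem:refraction}. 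The key point is that the upper-left entry works out to $\frac{\partial q^{\cald_j\prime}}{\partial q^\cald_j} = -1$: the explicit $q^\cald_j$-dependence enters only through the boundary-hit point and through $t_j$, and because the return velocity is exactly the negative of the incoming velocity, these contributions combine to give a derivative of $-1$ rather than the velocity-ratio cancellation seen in the refraction case. Exploiting the block-triangular structure, the determinant is then the product $\frac{\partial q^{\cald_j\prime}}{\partial q^\cald_j}\cdot\frac{\partial p^{\cald_j\prime}}{\partial p^\cald_j} = (-1)\cdot(-1) = 1$, so $|\text{det}\mathcal{J}\mathcal{T}| = 1$ as claimed.

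I do not anticipate a genuine obstacle here, as reflection is an involution on the momentum and an energy-preserving bounce in configuration space; the only thing requiring care is correctly bookkeeping the sign of the post-reflection velocity and the boundary re-entry point so that the off-diagonal term $\frac{\partial p^{\cald_j\prime}}{\partial q^\cald_j}$ is genuinely zero (making the matrix triangular) and the two diagonal entries are each $-1$. The mild subtlety is that, unlike the refraction case where the magnitude of $v_j$ changes and the Jacobian is $1$ via a ratio cancellation, here the magnitude of $v_j$ is unchanged, so I should verify the $-1$ in $\frac{\partial q^{\cald_j\prime}}{\partial q^\cald_j}$ directly from the chain rule rather than by analogy. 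Since $\mathbb{T}^{N_\cald}$ is flat (zero potential), no additional curvature corrections arise, and the computation closes just as cleanly as in Lemma~\ref{lem:refraction}.
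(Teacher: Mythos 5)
Your proposal matches the paper's proof essentially step for step: reduce to the $2\times 2$ Jacobian block in $(q^\cald_j, p^\cald_j)$, note $p^{\cald_j\prime} = -p^\cald_j$ makes the matrix upper triangular with bottom-right entry $-1$, and verify via the explicit formula $q^{\cald_j\prime} = 1 + \text{sign}(p^\cald_j) - (k^\cald)'(p^\cald_j)\mu - q^\cald_j$ that the top-left entry is $-1$, giving determinant $1$. Your observation that the $-1$ here arises directly rather than via the velocity-ratio cancellation of the refraction case is exactly the distinction the paper's computation exhibits.
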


\begin{proof}
Following the same argument as in the proof of Lemma 2 of \cite{Mohasel_Afshar2015-uh}, we have
\[ | \text{det}\mathcal{J} \mathcal{T} | = \left| \text{det}\left(\begin{array}{cc}
        \frac{\partial q^{\cald_j\prime}}{\partial q^\cald_j} & \frac{\partial q^{\cald_j\prime}}{\partial p^\cald_j}\\
        \frac{\partial p^{\cald_j\prime}}{\partial q^\cald_j} & \frac{\partial p^{\cald_j\prime}}{\partial p^\cald_j}
   \end{array}\right) \right| \]
   If we define $t_j = \frac{\text{sign} (v_j) + 1 - 2 q^\cald_j}{2 (k^\cald)' (p^\cald_j)} = \frac{\text{sign} (p^\cald_j) + 1 - 2 q^\cald_j}{2 (k^\cald)' (p^\cald_j)}$, then
\begin{eqnarray*}
    p^{\cald_j\prime} & = & - p^\cald_j\\
    q^{\cald_j\prime} & = & \frac{1 + \text{sign}(p^\cald_j)}{2} - (k^\cald)' (p^\cald_j) (\mu - t_j)\\
  & = & \frac{1 + \text{sign}(p^\cald_j)}{2} - (k^\cald)' (p^\cald_j) \left( \mu - \frac{\text{sign} (p^\cald_j) + 1 - 2 q^\cald_j}{2 (k^\cald)' (p^\cald_j)} \right)\\
  & = & 1 + \text{sign}(p^\cald_j) - (k^\cald)' (p^\cald_j) \mu - q^\cald_j
\end{eqnarray*}
This implies
\[ | \text{det}\mathcal{J} \mathcal{T} | = \left| \text{det}\left(\begin{array}{cc}
        \frac{\partial q^{\cald_j\prime}}{\partial q^\cald_j} & \frac{\partial q^{\cald_j\prime}}{\partial p^\cald_j}\\
        \frac{\partial p^{\cald_j\prime}}{\partial q^\cald_j} & \frac{\partial p^{\cald_j\prime}}{\partial p^\cald_j}
			\end{array}\right) \right| = \left| \text{det} \left(\begin{array}{cc}
        -1 & \frac{\partial q^{\cald_j\prime}}{\partial p^\cald_j}\\
     0 & - 1
   \end{array}\right) \right| = 1 \]
\end{proof}

\begin{lemma}
	\label{lem:prob_path}
	Given travel time $T$, $\forall s, s'\in\Omega\times\Sigma, s\neq s'$ for which $R_T(s, \{s'\})>0$, $\mathcal{P}(s, s', T)\neq\emptyset$. $\forall Y\in\mathcal{P}(s, s', T)$, the absolute value of the determinant of the Jacobian of $\mathcal{T}_{T, Y, x}$ is $|\text{det}\mathcal{J}\mathcal{T}_{T, Y, x}| = 1$, for all $x\in\Omega$ where $\mathcal{T}_{T, Y, x}$ is well-defined. Furthermore, the deterministic mapping $\mathcal{T}_{T, Y}$ is reversible, and there exists a sequence of proposals $\tilde{Y}\in\mathcal{P}(s', s, T)$, s.t. $s = \mathcal{T}_{T, Y}^{-1}(s') = \mathcal{T}_{T, \tilde{Y}}(s')$
\end{lemma}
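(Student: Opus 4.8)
The plan is to prove the three assertions in turn. Nonemptiness of $\mathcal{P}(s,s',T)$ is essentially definitional: as established in the ``Countable Number of Probabilistic Paths'' discussion, $R_T(s,\{s'\})=\sum_{Y}r_{T,Y}(s,s')$ is a convergent sum over the countably many proposal sequences $Y$, and only those with $\mathcal{T}_{T,Y}(s)=s'$ contribute nonzero terms; hence $R_T(s,\{s'\})>0$ forces at least one such $Y$, i.e. $\mathcal{P}(s,s',T)\neq\emptyset$. The bulk of the work is the unit-Jacobian claim and reversibility.

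For the Jacobian, I fix $Y$ and $x$ and slice the trajectory at its $M$ site visitations. The map $\mathcal{T}_{T,Y,x}$ is a finite composition whose factors are of two kinds: the continuous integrator $I(\cdot,\cdot,\delta\mid x,U,K^\calc)$ applied once per inter-visitation segment, which has $|\det|=1$ for every duration $\delta$ by the volume-preservation assumption on $I$; and discrete ``flight $+$ refraction/reflection'' operations, each volume-preserving in $(q^\cald,p^\cald)$ by Lemma \ref{lem:refraction} and Lemma \ref{lem:reflection}. Since $q^\cald_j$ is pinned to the boundary exactly at a visitation (so that a bare flight-to-boundary map already has a degenerate Jacobian), I follow Lemmas \ref{lem:refraction}, \ref{lem:reflection} in bundling each bounce with its flanking flights, so that every factor carries an interior configuration to an interior configuration and is a genuine diffeomorphism; multiplying determinants then gives $|\det\mathcal{J}\mathcal{T}_{T,Y,x}|=1$.

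The subtlety, and the step I expect to be the main obstacle, is that in the full system the discrete and continuous blocks couple in both directions: the hitting times $t_j=\frac{\tau(\text{sign}(v_j)+1)-2q^\cald_j}{2v_j}$ that set each integration duration depend on $(q^\cald,p^\cald)$, while the refraction gap $\Delta E=\log\frac{\pi(x,q^\calc)Q_j(\tilde x|x)}{\pi(\tilde x,q^\calc)Q_j(x|\tilde x)}$ depends on $q^\calc$ -- whereas Lemmas \ref{lem:refraction}, \ref{lem:reflection} were proved treating $\Delta E$ as constant. I would resolve this exactly as in the RRHMC analysis of \cite{Mohasel_Afshar2015-uh}, where $\Delta E$ likewise depends on position: extend the explicit derivative computation of Lemmas \ref{lem:refraction}, \ref{lem:reflection} so that it carries the continuous coordinates along, and observe that both extra dependencies enter only as off-diagonal (shear-type) blocks. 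At a bounce the continuous variables are untouched, so the $q^\calc$-dependence of $\Delta E$ sits strictly above the continuous identity block and never enters the determinant; within each flight the continuous update is one application of $I$, whose determinant is $\pm 1$ regardless of the (discretely determined) duration. Thus the couplings leave the product of determinants unchanged, and the value $1$ coming from Lemmas \ref{lem:refraction}, \ref{lem:reflection} together with the volume-preservation of $I$ survives intact.

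For reversibility I would construct $\tilde Y$ as the time-reversal of $Y$: visit the sites $j^{(M-1)},\ldots,j^{(0)}$ in reverse order at times $T-t^{(M-1)}<\cdots<T-t^{(0)}$, and at the reversed visitation undoing forward step $m$ propose the move that reverses it (the previous state $x^{(m)}$ after an accepted/refracted step, the same rejected candidate after a reflected step); this is a legal proposal since the single-site $Q_j$ have symmetric support, as already needed for $\Delta E$ to be finite. Running \textit{MixedHMC} backward then amounts to negating $p^\cald$ and $p^\calc$ and invoking: reversibility of $I$ (assumed); the fact that a reflection $p^\cald_j\leftarrow-p^\cald_j$ is its own inverse under momentum negation; and the fact that a refraction with gap $\Delta E$ is undone by a refraction with gap $-\Delta E$ -- one checks directly that applying $\text{sign}(p^\cald_j)(k^\cald)^{-1}(k^\cald(p^\cald_j)-\Delta E)$ and then the reversed step with $-\Delta E$ returns $-p^\cald_j$ together with $q^\cald_j\leftarrow q^\cald_j$, while the energy condition $k^\cald(p^\cald_j)>\Delta E$ guarantees the reversed step again refracts rather than reflects. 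This exhibits $\mathcal{T}_{T,Y}$ as a bijection whose inverse is realized by $\mathcal{T}_{T,\tilde Y}$, giving $\tilde Y\in\mathcal{P}(s',s,T)$ with $s=\mathcal{T}_{T,Y}^{-1}(s')=\mathcal{T}_{T,\tilde Y}(s')$; the momentum flips incurred along the way are immaterial for the subsequent detailed-balance computation because $\varphi$ is invariant under them.
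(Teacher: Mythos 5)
Your proof follows essentially the same route as the paper's: nonemptiness is treated as definitional, the unit-Jacobian claim comes from slicing $\mathcal{T}_{T,Y,x}$ at the $M$ site visitations and composing the refraction/reflection maps of Lemmas~\ref{lem:refraction} and~\ref{lem:reflection} with the volume-preserving integrator $I$, and $\tilde{Y}$ is built by the same time-reversal rule (propose the previous state after an accepted bounce, the same rejected candidate after a reflected one), with the momentum negation making the reversed path valid. If anything you are more explicit than the paper, which simply asserts the discrete and continuous parts ``don't interact'': your block-triangular accounting of the $q^\calc$-dependence of $\Delta E$ and of the discretely determined integration durations makes that assertion precise without changing the argument.
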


\begin{proof}
	Given travel time $T$, $\forall s, s'\in\Omega\times\Sigma$, if $R_T(s, \{ s'\}) > 0$, then by definition $\mathcal{P}(s, s', T)\neq\emptyset$. $\forall Y\in\mathcal{P}(s, s', Y)$, for some $x\in\Omega$, if the deterministic mapping $\mathcal{T}_{T, Y, x}$ is well-defined, then $\mathcal{T}_{T, Y, x}$ can be be written as the composition of a sequence of deterministic mappings
$$\mathcal{T}_{T, Y, x} = \mathcal{T}^{(0)}_{T, Y, x} \circ \mathcal{T}_{T, Y, x}^{(1)} \circ \cdots \circ \mathcal{T}_{T, Y, x}^{(M - 1)}$$
Each one of the mappings $\mathcal{T}_{T, Y, x}^{(m)}, m = 0, \ldots, M - 1$ consists of two parts that don't interact: a discrete part that operates on $q^\cald, p^\cald$, and a continuous part that operates on $q^\calc, p^\calc$. The discrete part is either a refraction mapping as described in Lemma \ref{lem:refraction}, or a reflection mapping as described in Lemma \ref{lem:reflection}. The continuous part is given by the integrator $I$, which is reversible and volume-preserving. Using Lemma \ref{lem:refraction} and Lemma \ref{lem:reflection} and the properties of the integrator $I$, it's easy to see that the absolute value of the determinant of the Jacobian
\[ | \text{det}\mathcal{J} \mathcal{T}_{T, Y, x} | = \prod_{m = 0}^{M - 1} | \text{det}\mathcal{J}
   \mathcal{T}_{T, Y, x}^{(m)} | = 1 \]

   $\forall Y\in\mathcal{P}(s, s', Y)$, define a new sequence of proposals $\tilde{Y} = (\tilde{y}^{(0)}, \tilde{y}^{(1)}, \ldots, \tilde{y}^{(M - 1)})$ where
\[ \tilde{y}^{(m)} = \left\{\begin{array}{ll}
     x^{(M - m - 1)} & \text{if } a^{(M - m - 1)} = 1 (\text{i.e. }y^{(M - m - 1)} =
     x^{(M - m)} )\\
     y^{(M - m - 1)} & \text{otherwise } (\text{i.e. } y^{(M - m - 1)} \neq x^{(M - m)},
     \text{which means } x^{(M - m - 1)} = x^{(M - m)})
   \end{array}\right. \]
   We claim that $\tilde{Y}\in\mathcal{P}(s, s', T)$, and $\mathcal{T}_{T, \tilde{Y}}(s') = s$. To see $\tilde{Y}$ has these desired properties, we look at its corresponding probabilistic path $\omega(s', T, \tilde{Y})$. The corresponding discrete states of the system at $M$ site visitations $\tilde{x}^{(m)}, m = 0, \ldots, M$ and the indices of the sites for the $M$ site visitations $\tilde{j}^{(m)}, m = 0, \ldots, M - 1$ are given by simple reversals of the original sequence of discrete states $x^{(m)}, m = 0, \ldots, M$ and the original sequence of indices for visited sites $j^{(m)}, m = 0, \ldots, M - 1$:
\begin{eqnarray*}
  \tilde{j}^{(m)} & = & j^{(M - m - 1)}, m = 0, \ldots, M - 1\\
  \tilde{x}^{(m)} & = & x^{(M - m)}, m = 0, \ldots, M
\end{eqnarray*}
The corresponding sequence of accept/reject decisions $\tilde{a}^{(m)}, m = 0,
\ldots, M - 1$ is also a simple reversal of the original sequence of
accept/reject decisions $a^{(m)}, m = 0, \ldots, M - 1$
\[ \tilde{a}^{(m)} = \mathbbm{1}_{\{ \tilde{y}^{(m)} = \tilde{x}^{(m + 1)} \}}
   = \left\{\begin{array}{ll}
     \mathbbm{1}_{\{ x^{(M - m - 1)} = x^{(M - m - 1)} \}} = 1 & \text{if }
     a^{(M - m - 1)} = 1\\
     \mathbbm{1}_{\{ y^{(M - m - 1)} = x^{(M - m - 1)} \}} = 0 & \text{if }
     a^{(M - m - 1)} = 0
   \end{array}\right. = a^{(M - m - 1)} \]
It's straightforward to verify that $\omega(s', T, \tilde{Y})$ is a valid probabilistic path that brings $s'$ back to $s$ in time $T$ through \textit{M-HMC}. In particular, note the importance of the momentum negating step in ensuring the existence of such a probabilistic path. This proves our claim.

\end{proof}

\begin{lemma}\label{lem:detailed_balance_singleton}
	$\forall s, s'\in \Omega\times\Sigma, s\neq s'$ for which $R_T(s, \{ s' \}) > 0$, for $Y\in\mathcal{P}(s, s', T)$, we have
\begin{equation*}
	\varphi (s) r_{T, Y} (s, s') = \varphi(s)r_{T, Y}(s, \mathcal{T}_{T, Y}(s)) = \varphi (s') r_{T, \tilde{Y}} (s', \mathcal{T}_{T, \tilde{Y}}(s')) = \varphi (s') r_{T, \tilde{Y}} (s', s)
\end{equation*}
where $\tilde{Y}$ is defined as in Lemma \ref{lem:prob_path}.
\begin{proof}
    We can directly calculate the transition probability $r_{T, Y} (s, s')$. Define $$E = U(x, q^\calc) + K^\calc(p^\calc), E' = U(x', q^{\calc\prime}) + K^\calc(p^{\calc\prime})$$
and
\begin{align*}
	\Delta U^\cald &= \sum_{m:a^{(m)}=1} [U(y^{(m)}, q^\calc(t^{(m)})) - U(x^{(m)}, q^\calc(t^{(m)}))]\\
	\Delta U^{\cald\prime} &= \sum_{m:\tilde{a}^{(m)}=1} [U(\tilde{y}^{(m)}, \tilde{q}^\calc(\tilde{t}^{(m)})) - U(\tilde{x}^{(m)}, \tilde{q}^\calc(\tilde{t}^{(m)}))]\\
\end{align*}
Then
\[ r_{T, Y} (s, s') = \prod_{m = 0}^{M - 1} Q_{j^{(m)}} (y^{(m)} |x^{(m)}) \min\{1, e^{-(E' - E - \Delta U^\cald)}\}\]
Correspondingly, we can also calculate the transition probability $r_{T, \tilde{Y}} (s', s)$. 
\[ r_{T, \tilde{Y}} (s', s) = \prod_{m = 0}^{M - 1} Q_{\tilde{j}^{(m)}} (\tilde{y}^{(m)} | \tilde{x}^{(m)})\min\{1, e^{-(E - E' - \Delta U^{\cald\prime})}\} \]
Due to the definition of $\tilde{Y}$, it's easy to see that $\Delta U^{\cald\prime} = -\Delta U^\cald$.

Note that
\begin{eqnarray*}
	\frac{r_{T, Y} (s, s')}{\min\{1, e^{-(E' - E - \Delta U^\cald)}\}} & = & \prod_{m = 0}^{M - 1} Q_{j^{(m)}}^{a^{(m)}} (y^{(m)} |x^{(m)}) \prod_{m = 0}^{M - 1} Q_{j^{(m)}}^{1 - a^{(m)}} (y^{(m)} |x^{(m)})\\
  & = & \prod_{m : a^{(m)} = 1} Q_{j^{(m)}} (y^{(m)} |x^{(m)}) \prod_{m : a^{(m)} = 0} Q_{j^{(m)}} (y^{(m)} |x^{(m)})\\
\frac{r_{T, \tilde{Y}} (s', s)}{\min\{1, e^{-(E - E' - \Delta U^{\cald\prime})}\}} & = &  \frac{r_{T, \tilde{Y}} (s', s)}{\min\{1, e^{-(E + \Delta U^{\cald} - E' )}\}} \\
& = & \prod_{m = 0}^{M - 1}
  Q_{\tilde{j}^{(m)}}^{\tilde{a}^{(m)}} (\tilde{y}^{(m)} | \tilde{x}^{(m)})
  \prod_{m = 0}^{M - 1} Q_{\tilde{j}^{(m)}}^{1 - \tilde{a}^{(m)}}
  (\tilde{y}^{(m)} | \tilde{x}^{(m)})\\
  & = & \prod_{m : \tilde{a}^{(m)} = 1} Q_{\tilde{j}^{(m)}} (\tilde{y}^{(m)}
  | \tilde{x}^{(m)}) \prod_{m : \tilde{a}^{(m)} = 0} Q_{\tilde{j}^{(m)}}
  (\tilde{y}^{(m)} | \tilde{x}^{(m)})\\
  & = & \prod_{m : a^{(M - m - 1)} = 1} Q_{j^{(M - m - 1)}} (x^{(M - m - 1)}
  |y^{(M - m - 1)})\\
  & \times & \prod_{m : a^{(M - m - 1)} = 0} Q_{j^{(M - m - 1)}} (y^{(M - m -
  1)} |x^{(M - m)})\\
  & = & \prod_{m : a^{(M - m - 1)} = 1} Q_{j^{(M - m - 1)}} (x^{(M - m - 1)}
  |y^{(M - m - 1)})\\
  & \times & \prod_{m : a^{(M - m - 1)} = 0} Q_{j^{(M - m - 1)}} (y^{(M - m -
  1)} |x^{(M - m - 1)})\\
  & = & \prod_{m : a^{(m)} = 1} Q_{j^{(m)}} (x^{(m)} |y^{(m)}) \prod_{m : a^{(m)} = 0} Q_{j^{(m)}} (y^{(m)} |x^{(m)})
\end{eqnarray*}
By following the probabilistic path $\omega (s, T, Y)$ and doing explicit calculations, we can show that
\[ K^{\cald}(p^{\cald\prime}) - K^{\cald}(p^{\cald}) = -\sum_{m: a^{(m)} = 1} \log \frac{e^{-U(x^{(m)}, q^\calc(t^{(m)}))}Q_{j^{(m)}}(y^{(m)} | x^{(m)})}{e^{-U(y^{(m)}, q^\calc(t^{(m)}))}Q_{j^{(m)}}(x^{(m)} | y^{(m)})} \]
Using the above equations, it's easy to see that
\begin{align*}
 &\frac{\varphi (s) r_{T, Y} (s, s')}{\varphi (s') r_{T, \tilde{Y}} (s', s)}\\
	= & \frac{e^{-(U(x, q^\calc) + K^\cald(p^\cald) + K^\calc(p^\calc))}r_{T,Y}(s, s')}{e^{-(U(x', q^{\calc\prime}) + K^\cald(p^{\cald\prime}) + K^\cald(p^{\cald\prime}))}r_{T, \tilde{Y}}(s', s)}\\
	=& e^{-(E - E')}e^{K^\cald(p^{\cald\prime}) - K^\cald(p^\cald)}\\
	\times &\frac{\prod_{m : a^{(m)} = 1} Q_{j^{(m)}} (y^{(m)} |x^{(m)}) \prod_{m : a^{(m)} = 0} Q_{j^{(m)}} (y^{(m)} |x^{(m)})}{\prod_{m : a^{(m)} = 1} Q_{j^{(m)}} (x^{(m)} |y^{(m)}) \prod_{m : a^{(m)} = 0} Q_{j^{(m)}} (y^{(m)} |x^{(m)})}\\
	\times & \frac{\min\{1, e^{-(E' - E - \Delta U^\cald)}\}}{\min\{1, e^{-(E - E' - \Delta U^{\cald\prime})}\}}\\
	=& e^{-(E - E')}\prod_{m: a^{(m)} = 1}\frac{e^{U(x^{(m)}, q^\calc(t^{(m)}))}Q_{j^{(m)}}(x^{(m)} | y^{(m)})}{e^{U(y^{(m)}, q^\calc(t^{(m)}))}Q_{j^{(m)}}(y^{(m)} | x^{(m)})}\\
	\times &\frac{\prod_{m : a^{(m)} = 1} Q_{j^{(m)}} (y^{(m)} |x^{(m)})}{\prod_{m : a^{(m)} = 1} Q_{j^{(m)}} (x^{(m)} |y^{(m)})}\frac{\min\{1, e^{-(E' - E - \Delta U^\cald)}\}}{\min\{1, e^{-(E  + \Delta U^{\cald} - E')}\}}\\
	=& e^{-(E + \Delta U^{\cald} - E')}\frac{\min\{1, e^{-(E' - E - \Delta U^\cald)}\}}{\min\{1, e^{-(E  + \Delta U^{\cald} - E')}\}}\\
	=&1
\end{align*}
\end{proof}
\end{lemma}

\section{Details on implementation with Laplace momentum}\label{sec:implementation}

\begin{algorithm}[ht!]
\caption{Definition of \textit{GetStepSizesNSteps}}
\label{alg:step_sizes_function}
\begin{algorithmic}[1]
\FUNCTION{GetStepSizesNSteps($\varepsilon, T, L, N_\cald, n_\cald$)}
\STATE $\Phi \sim \text{Dirichlet}_{N_\cald + 1}(1)$; $\Phi_1 \leftarrow \Phi_1 + \Phi_{N_\cald + 1}$
\STATE $\eta_t \leftarrow \sum_{s=1}^{n_\cald} \Phi_{[(t - 1)n_\cald + s] \bmod N_\cald}, t=1, \ldots, L$; $\eta_1 \leftarrow \eta_1 - \Phi_{N_\cald + 1}$
\STATE $\eta_t \leftarrow T \eta_t / \sum_{s=1}^L \eta_s, t=1, \ldots, L$; $M_t \leftarrow \lceil\eta_t / \varepsilon\rceil, t=1, \ldots, L$; $\eta_t \leftarrow \eta_t / M_t, t=1, \ldots, L$
\STATE \textbf{return} $\eta, M$
\ENDFUNCTION
\end{algorithmic}
\end{algorithm}

In what follows, line numbers refer to lines in Algorithm~\ref{alg:abstract}. Under Laplace momentum, $v_i=\text{sign}(p^\cald_i)\in\{1, -1\}$. As a result, different $q^\cald_i$ always evolve with a constant speed 1, and we no longer need the $\operatorname{argmin}$ in Line 7. Site visitation order is completely determined by the initial sampling of $q^\cald, p^\cald$. Furthermore, we can precompute all the involved step sizes (in Line 8). These step sizes are in fact differences of neighboring order statistics of $N^\cald$ uniform samples on $[0, \tau]$, and as a result have the Dirichlet distribution as the joint distribution. The initial momentum is given by $p^{\cald(0)}_i\sim \nu(p)\propto e^{-|p|}$, which corresponds to the initial kinetic energy $k^\cald(p^{\cald(0)}_i)\sim\text{Exponential}(1)$.

The above observations indicate that, using Laplace momentum, we no longer need to keep track of $q^\cald, p^\cald$. Instead, at the beginning of each iteration, we can sample the site visitation order as a random permutation, the step sizes from a Dirichlet distribution, and the kinetic energies from independent exponential distributions. In each iteration, we simply evolve the system according to the step sizes, visit each site in order, and keep track of changes in kinetic energies. These simplications results in the efficient implementation described in Algorithm 1 in the main text. See also Algorithm~\ref{alg:step_sizes_function} for the definition of the function \textit{GetStepSizesNSteps} in Algorithm 1 in the main text.

\section{Python function for comparing M-HMC with naive MH within HMC}
Code for reproducing the results in the paper is available at \url{https://github.com/StannisZhou/mixed_hmc}. In particular, we include below a illustrative python function for comparing M-HMC with naive Metropolis updates within HMC. Experimental results using this function can be reproduced using the script \textit{test\_naive\_mixed\_hmc.py} under \textit{scripts/simple\_gmm}.

\begin{verbatim}
import numba
import numpy as np
from tqdm import tqdm


def naive_mixed_hmc(
    x0, q0, n_samples, epsilon, L, pi, mu_list, sigma_list, use_k=True
):
    """Function for comparing mixed HMC and naive Metropolis updates within HMC

    Parameters
    ----------
    x0 : int
        Discrete variable for the mixture component
    q0 : float
        Continuous variable for the state of GMM
    n_samples : int
        Number of samples to draw
    epsilon : float
        Step size
    L : int
        Number of steps
    pi : np.array
        Array of shape (n_components,). The probabilities for different components
    mu_list : np.array
        Array of shape (n_components,). Means of different components
    sigma_list : np.array
        Array of shape (n_components,). Standard deviations of different components
    use_k : bool
        True if we use mixed HMC. False if we make naive Metropolis updates within HMC

    Returns
    -------
    x_samples : np.array
        Array of shape (n_samples,). Samples for x
    q_samples : np.array
        Array of shape (n_samples,). Samples for x
    accept_list : np.array
        Array of shape (n_samples,). Records whether we accept or reject at each step
    """

    @numba.jit(nopython=True)
    def potential(x, q):
        potential = (
            -np.log(pi[x])
            + 0.5 * np.log(2 * np.pi * sigma_list[x] ** 2)
            + 0.5 * (q - mu_list[x]) ** 2 / sigma_list[x] ** 2
        )
        return potential

    @numba.jit(nopython=True)
    def grad_potential(x, q):
        grad_potential = (q - mu_list[x]) / sigma_list[x] ** 2
        return grad_potential

    @numba.jit(nopython=True)
    def take_naive_mixed_hmc_step(x0, q0, epsilon, L, n_components):
        # Resample momentum
        p0 = np.random.randn()
        k0 = np.random.exponential()
        # Initialize q, k, delta_U
        x = x0
        q = q0
        p = p0
        k = k0
        delta_U = 0.0
        # Take L steps
        for ii in range(L):
            q, p = leapfrog_step(x=x, q=q, p=p, epsilon=epsilon)
            x, k, delta_U = update_discrete(
                x0=x, k0=k, q=q, delta_U=delta_U, n_components=n_components
            )

        # Accept or reject
        current_E = potential(x0, q0) + 0.5 * p0 ** 2
        proposed_E = potential(x, q) + 0.5 * p ** 2
        accept = np.random.rand() < np.exp(current_E + delta_U - proposed_E)
        if not accept:
            x, q = x0, q0

        return x, q, accept

    @numba.jit(nopython=True)
    def leapfrog_step(x, q, p, epsilon):
        p -= 0.5 * epsilon * grad_potential(x, q)
        q += epsilon * p
        p -= 0.5 * epsilon * grad_potential(x, q)
        return q, p

    @numba.jit(nopython=True)
    def update_discrete(x0, k0, q, delta_U, n_components):
        x = x0
        k = k0
        distribution = np.ones(n_components)
        distribution[x] = 0
        distribution /= np.sum(distribution)
        proposal_for_ind = np.argmax(np.random.multinomial(1, distribution))
        x = proposal_for_ind
        delta_E = potential(x, q) - potential(x0, q)
        # Decide whether to accept or reject
        if use_k:
            accept = k > delta_E
            if accept:
                delta_U += potential(x, q) - potential(x0, q)
                k -= delta_E
            else:
                x = x0
        else:
            accept = np.random.exponential() > delta_E
            assert k == k0
            if not accept:
                x = x0

        return x, k, delta_U

    x, q = x0, q0
    x_samples, q_samples, accept_list = [], [], []
    for _ in tqdm(range(n_samples)):
        x, q, accept = take_naive_mixed_hmc_step(
            x0=x, q0=q, epsilon=epsilon, L=L, n_components=pi.shape[0]
        )
        x_samples.append(x)
        q_samples.append(q)
        accept_list.append(accept)

    x_samples = np.array(x_samples)
    q_samples = np.array(q_samples)
    accept_list = np.array(accept_list)
    return x_samples, q_samples, accept_list
\end{verbatim}
\section{Binary HMC Samplers are special cases of M-HMC}\label{sec:expressions}

Formally, we have the following equivalence between binary HMC and M-HMC:

\begin{proposition}
	Binary HMC is equivalent to a variant of M-HMC (where $q^\cald$ is initialized at the start and not resampled at each iteration) with $\tau=1$ and deterministic proposals $Q_i,  i = 1, \ldots, N_\cald$
\begin{equation*}
	Q_i (\tilde{x}|x) = \begin{cases}1\text{, if }\tilde{x}_i = - x_i, \tilde{x}_j = x_j, \forall j \neq i\\0\text{, otherwise}\end{cases}
\end{equation*}
Gaussian and exponential binary HMC correspond to $k^\cald(p)=|p|$ and $k^\cald(p)=|p|^{\frac{2}{3}}$ respectively.
\end{proposition}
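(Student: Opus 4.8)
The plan is to exhibit an explicit, measure-preserving change of variables $\Phi$ between the phase space of binary HMC, $(q^\cald, p^\cald)\in\mathbb{R}^{N_\cald}\times\mathbb{R}^{N_\cald}$, and that of the M-HMC variant, $(x, q^\cald, p^\cald)\in\Omega\times\mathbb{T}^{N_\cald}\times\mathbb{R}^{N_\cald}$, and to show that $\Phi$ simultaneously (i) carries the binary-HMC invariant measure $\Psi(q^\cald)\nu(p^\cald)$ to the M-HMC invariant measure (uniform on the torus times $e^{-K^\cald}$), and (ii) conjugates the two deterministic flows. Since both $\Psi$ and the dynamics factorize across coordinates (the sites interact only through the jumps of $\pi(x)$ at the sign boundaries), it suffices to work one coordinate at a time and then check the coupling at boundaries. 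On site $i$, the discrete label is $x_i=\mathrm{sign}(q^\cald_i)$, which is exactly the explicitly-recorded $x_i$ of M-HMC, so $\Phi$ will be the identity on $x$; the content is in mapping the within-region motion of $q^\cald_i$ to uniform motion on $[0,\tau]$.

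Concretely, within one sign region binary HMC evolves a 1D Hamiltonian with kinetic energy $\tfrac12(p^\cald_i)^2$ and potential $V$ (with $V(q)=\tfrac12 q^2$ in the Gaussian case and $V(q)=|q|$ in the exponential case), at fixed energy $E_i=\tfrac12(p^\cald_i)^2+V(q^\cald_i)$. I define $\Phi$ by letting the new torus coordinate be the normalized time elapsed since the particle last entered the region, and by matching energies through $k^\cald(p^\cald_i)=E_i$ (with sign of momentum = direction of motion). Writing $\mathcal{P}(E)=2\int_0^{V^{-1}(E)} \big(2(E-V(q))\big)^{-1/2}\,\mathrm{d}q$ for the time of one excursion $0\to\text{apex}\to 0$, the requirement that $\Phi$ turn this motion into constant-speed torus motion pins down $k^\cald$ through the single relation $\frac{\mathrm{d}}{\mathrm{d}u}(k^\cald)^{-1}(u)=\mathcal{P}(u)$, equivalently $(k^\cald)'(p^\cald_i)=1/\mathcal{P}(E_i)$. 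A direct computation gives $\mathcal{P}(E)=\pi$ (independent of $E$) in the Gaussian case, forcing $(k^\cald)^{-1}$ linear, i.e.\ $k^\cald(p)=|p|$; and $\mathcal{P}(E)\propto\sqrt{E}$ in the exponential case, forcing $(k^\cald)^{-1}(u)\propto u^{3/2}$, i.e.\ $k^\cald(p)=|p|^{2/3}$, which are exactly the two claimed forms. I would then note that measure preservation is automatic from the same relation: the canonical identity $\mathrm{d}q^\cald_i\,\mathrm{d}p^\cald_i = \mathrm{d}E_i\,\mathrm{d}t$ gives the binary-HMC energy density $\propto e^{-E}\mathcal{P}(E)$ with the angle uniform, while $\nu(p)\propto e^{-k^\cald(p)}$ pushed through $k^\cald(p^\cald_i)=E_i$ yields the same energy density $\propto e^{-E}\mathcal{P}(E)$ with the torus coordinate uniform.

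Next I would verify the boundary behavior and the resampling/accept steps. Because the proposals $Q_i$ deterministically flip $x_i$ with $Q_i(\tilde x|x)=Q_i(x|\tilde x)=1$, the M-HMC energy gap reduces to $\Delta E=U(\tilde x)-U(x)$, which is precisely the jump of the piecewise potential of $\Psi$ across the sign boundary. In one dimension the perpendicular momentum is the whole momentum, so binary HMC's energy-conserving refract/reflect rule (refract iff $\tfrac12(p^\cald_i)^2>\Delta E$, reducing the kinetic energy by $\Delta E$; reflect otherwise) matches M-HMC's rule term by term under $k^\cald(p^\cald_i)=E_i$ (refract iff $k^\cald(p^\cald_i)>\Delta E$, set $k^\cald\leftarrow k^\cald-\Delta E$ and $q^\cald_i\leftarrow\tau-q^\cald_i$; reflect otherwise), and I would check that the torus flip $q^\cald_i\mapsto\tau-q^\cald_i$ corresponds to re-entering a region at normalized time $0$. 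Finally, both algorithms leave $q^\cald$ unresampled (hence the stated variant), resample momentum from the matching conditional, and, because each integrates its dynamics exactly while conserving total energy, accept with probability one; composing the per-excursion maps via Lemmas~\ref{lem:refraction} and~\ref{lem:reflection} shows the full trajectories are conjugate.

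The main obstacle I anticipate is the dynamics-conjugacy step, specifically computing $\mathcal{P}(E)$ and solving the resulting relation for $k^\cald$ while keeping the bookkeeping of entry times, excursion direction, and the $\tau-q^\cald_i$ reset consistent across successive refractions. A secondary delicate point is the momentum-resampling correspondence: since $\Phi$'s torus coordinate depends on $p^\cald_i$, fixing $q^\cald_i$ in binary HMC is not literally fixing the torus coordinate, so the cleanest honest statement is that $\Phi$ conjugates the flows and matches the invariant measures (both resampling steps being Gibbs moves for the respective invariant conditionals), rather than claiming a naive step-by-step kernel identity through the resampling.
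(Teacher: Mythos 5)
Your proposal is correct in substance but takes a genuinely different route from the paper's proof. The paper proceeds by direct distributional verification at each site $j$: it computes the joint law of the first boundary-hitting time $t_j^{(0)}$ and the initial energy $k_j^{(0)}$ for each sampler (via the independence statement in Lemma~\ref{lem:gaussian_independent} for the Gaussian case, and an explicit characteristic-function computation for the exponential case), checks that these laws coincide up to a constant time rescaling ($\pi$, resp.\ $\tfrac{4\sqrt{2}}{3}$), and checks that the recurrence-time functions $t_j(k)$ --- your $\mathcal{P}(E)$ --- agree under the same rescaling; equivalence then follows because, given the energies, the hitting times, and the $\Delta E$'s, the reflect/refract bookkeeping of the two samplers is identical. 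You instead build the conjugacy explicitly through energy--time (action--angle) coordinates: the relation $\tfrac{\mathrm{d}}{\mathrm{d}u}(k^\cald)^{-1}(u)=\mathcal{P}(u)$ \emph{derives} the kinetic energies $|p|$ and $|p|^{2/3}$ from the period functions $\mathcal{P}\equiv\pi$ and $\mathcal{P}(E)=2\sqrt{2E}$ rather than verifying them, and the canonical identity $\mathrm{d}q\,\mathrm{d}p=\mathrm{d}E\,\mathrm{d}t$ delivers the measure matching (energy density $\propto e^{-E}\mathcal{P}(E)$, uniform phase) in one stroke --- this is precisely the content of the paper's two distributional computations, obtained more systematically and with an explanation of where the exponent $2/3$ comes from. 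Your boundary analysis (at the wall $V(0)=0$, so the perpendicular kinetic energy equals the total site energy $E_i=k^\cald(p_i)$, and $\Delta E$ reduces to the potential jump because the deterministic $Q_i$ are symmetric) coincides with the paper's. Your closing caveat about momentum refreshment is well placed rather than a defect: since your torus coordinate depends on $p^\cald_i$, resampling does not commute with $\Phi$, and the paper's own computation of the law of $(t_j^{(0)},k_j^{(0)})$ likewise draws $q_j^{(0)}$ from its stationary conditional; so the equivalence both arguments establish is of the deterministic flows, the boundary rules, and the invariant conditionals, which is the sense in which the proposition should be read.
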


Since no continuous component is involved in a binary distribution, for notational simplicity, we drop all the superscript $\cald$ in the following discussions. We consider the family of kinetic energies $K_\beta(p)=|p|^\beta$, and define the corresponding distribution to be $\nu_\beta(p)\propto e^{-K_\beta(p)}$. We want to show that the binary HMC samplers are special cases of a variant of M-HMC. In what follows, we use M-HMC to refer to the variant of M-HMC where $q$ is initialized at the start and not resampled at each iteration. 

In order to establish the equivalence between binary HMC and M-HMC, we need to study:
\begin{enumerate}
  \item For site $j$, the distribution on the initial time it takes to visit site $j$, which we denote by $t^{(0)}_j$.
  \begin{itemize}
    \item As shown in Algorithm 1, in M-HMC
	    \[ t^{(0)}_j = \frac{\ensuremath{\operatorname{sign}} (v_j^{(0)}) + 1 - 2 q_j^{(0)}}{2 v_j^{(0)}} \]
    where $v_j^{(0)} = K_{\beta}^{'} (p_j^{(0)}) =\ensuremath{\operatorname{sign}} (p_j^{(0)}) \beta | p_j^{(0)} |^{\beta - 1}$ is the velocity at site $j$, and
    $$q_j^{(0)} \sim U ([0, 1]), p_j^{(0)} \sim \nu_\beta(p_j^{(0)})$$
    
    \item For the Gaussian binary HMC sampler,
    \[ t_j^{(0)} = \left\{\begin{array}{ll}
         - \arctan \left( \frac{q^{(0)}_j}{p^{(0)}_j} \right) &
         \ensuremath{\operatorname{if}} \frac{q^{(0)}_j}{p^{(0)}_j} \leqslant
         0\\
         \pi - \arctan \left( \frac{q^{(0)}_j}{p^{(0)}_j} \right) &
         \ensuremath{\operatorname{if}} \frac{q^{(0)}_j}{p^{(0)}_j} > 0
       \end{array}\right. \]
    where $q^{(0)}_j, p^{(0)}_j \sim N (0, 1)$.
    
    \item For the exponential binary HMC sampler,
    \[ t_j^{(0)} = p^{(0)}_j + \sqrt{(p^{(0)}_j)^2 + 2 q^{(0)}_j} \]
    where $q_j^{(0)} \sim \exp (1), p_j^{(0)} \sim N (0, 1)$.
  \end{itemize}
  \item For site $j$, the distribution on the initial total energy, which we denote by $k^{(0)}_j$.
  \begin{itemize}
    \item For M-HMC, $k_j^{(0)} = K_{\beta} (p_j^{(0)})$, where
	    $p_j^{(0)} \sim \nu_\beta(p_j^{(0)})$.
    
    \item For the Gaussian binary HMC sampler,
    \[ k_j^{(0)} = \frac{1}{2} (q_j^{(0)})^2 + \frac{1}{2} (p_j^{(0)})^2 \]
    where $q^{(0)}_j, p^{(0)}_j \sim N (0, 1)$.
    
    \item For the exponential binary HMC sampler,
    \[ k_j^{(0)} = q_j^{(0)} + \frac{1}{2} (p_j^{(0)})^2 \]
    where $q_j^{(0)} \sim \exp (1), p_j^{(0)} \sim N (0, 1)$.
  \end{itemize}
  \item For site $j$, after we reach 0 or 1, if we have total energy $k$, the time it takes to hit a boundary again at this site. We denote this time by $t_j (k)$.
  \begin{itemize}
    \item For M-HMC, $t_j (k) = \frac{1}{\beta k^{1 -
    \frac{1}{\beta}}}$
    
    \item For the Gaussian binary HMC, $t_j (k) = \pi$
    
    \item For the exponential binary HMC, $t_j (k) = 2 \sqrt{2 k}$
  \end{itemize}
\end{enumerate}
Since different dimensions are independent of each other, we only need to look
at one particular dimension $j$. We can prove the corresponding propositions
if we can establish suitable equivalence concerning the joint distribution on
$(t_j^{(0)}, k_j^{(0)})$, and the function $t_j (k)$.

\subsection{Proof of Proposition 1 for Gaussian binary HMC}

In order to prove Proposition 1 for Gaussian binary HMC, we first prove a lemma

\begin{lemma}\label{lem:gaussian_independent}
  Assume $q, p \sim N (0, 1)$ are two independent standard normal random
  variables. Then $\frac{q}{p}$ and $q^2 + p^2$ are independent. Furthermore,
  $\arctan \left( \frac{q}{p} \right)$ follows the uniform distribution $U
  \left( \left[ - \frac{\pi}{2}, \frac{\pi}{2} \right] \right)$, and
  $\frac{q^2 + p^2}{2}$ follows the exponential distribution $\exp (1)$.
\end{lemma}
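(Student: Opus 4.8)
The plan is to pass to polar coordinates, where the joint law of $(q,p)$ factorizes cleanly into independent radial and angular parts. Concretely, I would write $q = R\cos\Theta$ and $p = R\sin\Theta$ with $R \geq 0$ and $\Theta \in [0, 2\pi)$. Since $(q,p)$ has joint density $\frac{1}{2\pi}e^{-(q^2+p^2)/2}$ and the Jacobian of the polar change of variables is $r$, the joint density of $(R,\Theta)$ is $\frac{r}{2\pi}e^{-r^2/2}$ for $r\geq 0$, $\theta\in[0,2\pi)$. This density factorizes as a product of a function of $r$ alone, namely $re^{-r^2/2}$, and the constant $\frac{1}{2\pi}$ in $\theta$, which shows at once that $R$ and $\Theta$ are independent, that $\Theta\sim U([0,2\pi))$, and that $R$ has the Rayleigh density $re^{-r^2/2}$.

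Next I would observe that $q^2+p^2 = R^2$ is a measurable function of $R$ only, while $q/p = \cot\Theta$ is a measurable function of $\Theta$ only. Independence of $R$ and $\Theta$ therefore propagates to any measurable functions of them, so $q^2+p^2$ and $q/p$ are independent, and hence so are $\tfrac{q^2+p^2}{2}$ and $\arctan(q/p)$. For the radial distribution, a one-line change of variables $u = r^2/2$ applied to the Rayleigh density gives density $e^{-u}$ on $u\geq 0$; equivalently $R^2\sim\chi^2_2$, so $\tfrac{q^2+p^2}{2} = R^2/2 \sim \exp(1)$.

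For the marginal of $\arctan(q/p)$, I would use that the ratio of two independent standard normals is standard Cauchy with density $\frac{1}{\pi(1+x^2)}$. For $t\in(-\tfrac{\pi}{2},\tfrac{\pi}{2})$ this yields
\[
\mathbb{P}\!\left(\arctan\tfrac{q}{p}\leq t\right) = \mathbb{P}\!\left(\tfrac{q}{p}\leq\tan t\right) = \int_{-\infty}^{\tan t}\frac{dx}{\pi(1+x^2)} = \frac{t+\pi/2}{\pi},
\]
which is exactly the CDF of $U([-\tfrac{\pi}{2},\tfrac{\pi}{2}])$, completing the claim.

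The computations are all routine; the one point requiring care is that $q/p$ determines the polar angle only modulo $\pi$ (since $\theta$ and $\theta+\pi$ give the same ratio), so the induced map $\Theta\mapsto\arctan(q/p)$ is the reduction of $\Theta$ modulo $\pi$ into $(-\tfrac{\pi}{2},\tfrac{\pi}{2})$ rather than the identity. This is precisely why I verify the $\arctan$ marginal through the Cauchy CDF rather than reading it directly off $\Theta$; alternatively one checks that a uniform angle on the circle, folded modulo $\pi$, remains uniform on an interval of length $\pi$ and that this folding does not disturb independence from $R$.
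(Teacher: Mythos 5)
Your proof is correct, and it reaches the same three conclusions by a route that shares the paper's central idea — polar coordinates decouple the radius from the angle — but formalizes it differently. The paper computes the characteristic function of the vector $\left(\frac{q}{p},\, q^2+p^2\right)$, passes to polar coordinates inside the integral, and observes that the result factors into the product of a Cauchy characteristic function and an exponential one; independence and both marginals are then read off from that factorization. You instead factorize the joint \emph{density} of $(R,\Theta)$ directly, deduce independence of $R$ and $\Theta$ from the product form, and push that independence through the measurable maps $R\mapsto R^2/2$ and $\Theta\mapsto \arctan(\cot\Theta)$, identifying the marginals by elementary changes of variables and the standard Cauchy CDF. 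Your version is the more elementary of the two: it avoids the Fourier computation entirely and makes explicit the one genuinely delicate point — that $q/p$ determines the angle only modulo $\pi$, so uniformity of the folded angle must be checked rather than read off — which the paper handles implicitly when it collapses the $\theta$-integral from $[0,2\pi)$ to $\left(-\frac{\pi}{2},\frac{\pi}{2}\right)$ using the periodicity of $\tan$. The only external fact you invoke, that the ratio of independent standard normals is standard Cauchy, is classical and in any case also follows from your own polar decomposition, so nothing is missing.
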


\begin{proof}
We calculate the characteristic function of the
random vector $\left( \frac{q}{p}, q^2 + p^2 \right)$:
\begin{eqnarray*}
  &  & \mathbb{E}_{q, p \sim N (0, 1)} \left[ e^{i \left[ t_1 \frac{q}{p} +
  t_2 (q^2 + p^2) \right]} \right]\\
  & = & \frac{1}{2 \pi} \int_{\mathbb{R}^2} e^{it_1 \frac{q}{p} + it_2 (q^2 +
  p^2)} e^{- \frac{q^2 + p^2}{2}} \mathrm{d} q \mathrm{d} p\\
  & = & \frac{1}{2 \pi} \int_0^{+ \infty} \int_0^{2 \pi} e^{it_1 \tan \theta}
  e^{it_2 r^2} e^{- \frac{r^2}{2}} r \mathrm{d} r \mathrm{d} \theta\\
  & = & \left[ \int_0^{2 \pi} e^{it_1 \tan \theta} \frac{1}{2 \pi} \mathrm{d}
  \theta \right] \left[ \int_0^{+ \infty} e^{it_2 r^2 - \frac{r^2}{2}} r
  \mathrm{d} r \right]\\
  & = & \left[ \int_{- \frac{\pi}{2}}^{\frac{\pi}{2}} e^{it_1 \tan \theta}
	  \frac{1}{\pi} \mathrm{d} \theta \right] \left[ \int_0^{+ \infty} e^{it_2 x}
  \frac{1}{2} e^{- 2 x} \mathrm{d} x \right]\\
  & = & \left[ \int_{- \infty}^{+ \infty} e^{it_1 x} \frac{1}{\pi (1 + x^2)}
	  \mathrm{d} x \right] \left[ \int_0^{+ \infty} e^{it_2 x} \frac{1}{2} e^{- 2 x}
  \mathrm{d} x \right]\\
  & = & \mathbb{E}_{x \sim \text{Cauchy} (0, 1)} [e^{it_1 x}] \mathbb{E}_{x
  \sim \exp (2)} [e^{it_2 x}]
\end{eqnarray*}
This calculation implies that $\frac{q}{p}$ and $q^2 + p^2$ are independent,
and that $\frac{q}{p} \sim \text{Cauchy} (0, 1)$, $q^2 + p^2 \sim \exp (2)$.
Since the cumulative distribution function (CDF) of Cauchy$(0, 1)$ is given by
\[ \frac{1}{\pi} \arctan (x) + \frac{1}{2} \]
we have $\frac{1}{\pi} \arctan \left( \frac{q}{p} \right) + \frac{1}{2} \sim U
([0, 1])$, which implies that $\arctan \left( \frac{q}{p} \right) \sim U
\left( \left[ - \frac{\pi}{2}, \frac{\pi}{2} \right] \right)$. From $q^2 + p^2
\sim \exp (2)$, it's easy to deduce that $\frac{q^2 + p^2}{2} \sim \exp
(1)$.
\end{proof}

\begin{proof}\textbf{(Proposition 1 for Gaussian binary HMC)}
For the Gaussian binary HMC sampler, using Lemma \ref{lem:gaussian_independent} and the expressions we derived in Section \ref{sec:expressions}, given a dimension $j$, it's easy to see that $t_j^{(0)}$ and $k_j^{(0)}$ are independent, and that $t_j^{(0)} \sim U ([0, \pi])$, $k_j^{(0)} \sim \exp (1)$. For M-HMC with $\beta = 1 \,$, it's easy to see that we also have $t_j^{(0)}$ and $k_j^{(0)}$ are independent, and that $t_j^{(0)} \sim U ([0, 1]), k_j^{(0)} \sim \exp (1)$. This implies that the random vector $\left( \frac{t_j^{(0)}}{\pi}, k_j^{(0)} \right)$ from the Gaussian binary HMC sampler has the same joint distribution as the random vector $(t_j^{(0)}, k_j^{(0)})$ from M-HMC with $\beta = 1$.  

For the Gaussian binary HMC sampler, $t_j (k) = \pi$, which is a constant function and is independent of the value of $k$. For M-HMC with $\beta = 1$, it's easy to see that $t_j (k) = 1$, which is also a constant function. This implies that $\forall k, \frac{t_j (k)}{\pi}$ for the Gaussian binary HMC sampler is equivalent to $t_j (k)$ for M-HMC with $\beta = 1$.

The above equivalences imply that the Gaussian binary HMC has exactly the same behavior as M-HMC with $\beta = 1$. In fact, the Gaussian binary HMC sampler behaves like scaling the time of M-HMC with $\beta = 1$ by $\pi$.
\end{proof}

\subsection{Proof of Proposition 1 for exponential binary HMC}

\begin{proof}\textbf{(Proposition 1 for exponential binary HMC)}
Using the expressions we derived in Section \ref{sec:expressions}, we can see that, at a given site $j$,
\begin{itemize}
  \item For the exponential binary HMC sampler, the joint distribution of the
  random vector $(t_j^{(0)}, k_j^{(0)})$ is the same as the random vector
  $\left( p + \sqrt{p^2 + 2 q}, q + \frac{1}{2} p^2 \right)$, where $q \sim
  \exp (1), p \sim N (0, 1)$ are independent. For a given total energy level
  $k$, $t_j (k) = 2 \sqrt{2 k}$.
  
  \item For M-HMC with $\beta = \frac{2}{3}$, the joint
  distribution of the random vector $(t_j^{(0)}, k_j^{(0)})$ is the same as
  the random vector $\left( \frac{3}{2} q | p |^{\frac{1}{3}}, | p
  |^{\frac{2}{3}} \right)$, where $q \sim U ([0, 1]), p \sim G \left( 0, 1,
  \frac{2}{3} \right)$ are independent. For a given total energy level $k$,
  $t_j (k) = \frac{3}{2} \sqrt{k}$.
\end{itemize}
In order to establish the equivalence between these two samplers, we calculate the characteristic functions of two random vectors. We first calculate the characteristic function of the random vector $\left( p + \sqrt{p^2 + 2 q}, q + \frac{1}{2} p^2 \right)$, where $q \sim \exp (1), p \sim N (0, 1)$ are independent:
  \begin{eqnarray*}
    &  & \mathbb{E}_{q \sim \exp (1), p \sim N (0, 1)} \left[ e^{i \left[ t_1
    \left( p + \sqrt{p^2 + 2 q} \right) + t_2 \left( q + \frac{1}{2} p^2
    \right) \right]} \right]\\
    & = & \frac{1}{\sqrt{2 \pi}} \int_0^{+ \infty} \int_{\mathbb{R}} e^{it_1
    \left( p + \sqrt{p^2 + 2 q} \right) + it_2 \left( q + \frac{p^2}{2}
    \right)} e^{- q} e^{- \frac{p^2}{2}} \mathrm{d} p \mathrm{d} q\\
    & = & \frac{1}{2 \sqrt{2 \pi}} \int_{\mathbb{R}^2} e^{it_1 \left( p +
    \sqrt{p^2 + 2 | q |} \right) + it_2 \left( | q | + \frac{p^2}{2} \right)}
    e^{- | q |} e^{- \frac{p^2}{2}} \mathrm{d} p \mathrm{d} q\\
    & \mathop{=}\limits^{p = r \cos \theta, q = \text{sign} (\sin \theta) \frac{r^2
    \sin^2 \theta}{2}} & \frac{1}{2 \sqrt{2 \pi}} \int_0^{+ \infty} \int_0^{2
    \pi} e^{it_1 r (1 + \cos \theta) + it_2 \frac{r^2}{2}} e^{- \frac{r^2}{2}}
    r^2 \sin \theta \mathrm{d} \theta \mathrm{d} r
  \end{eqnarray*}

\begin{figure}[ht!]
    \centering
	\includegraphics[width=\textwidth]{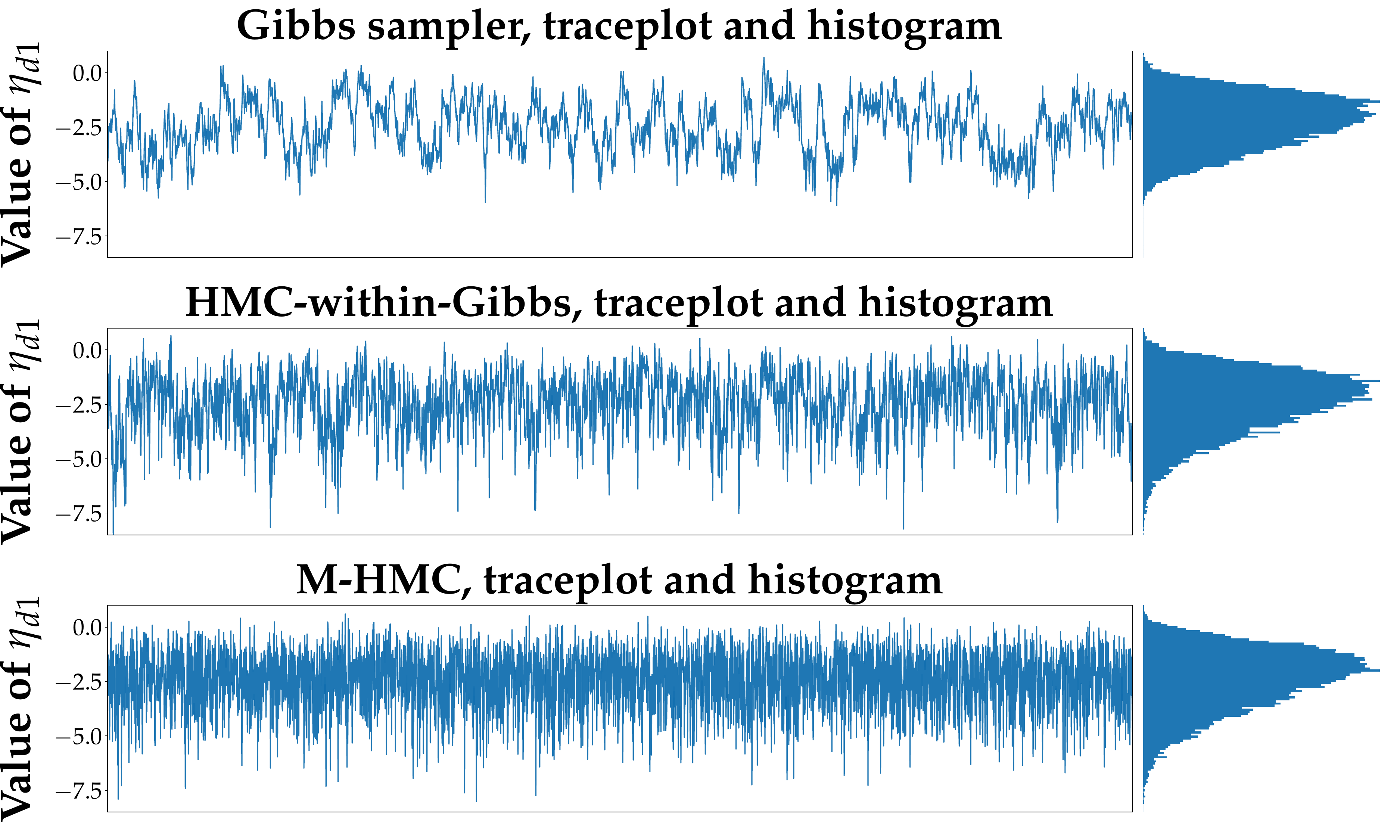}
    \caption{Traceplots and samples histograms of posterior samples of $\eta_{d 1}$ on a document where Gibbs agrees with HwG, NwG\&M-HMC in posterior means for $\eta_{d 1}$}
	\label{fig:ctm_gibbs_positive}
\end{figure}

Next we calculate the characteristic function of the random vector $\left( 2 \sqrt{2} q | p |^{\frac{1}{3}}, | p |^{\frac{2}{3}} \right)$, where $q \sim U ([0, 1]), p \sim G \left( 0, 1, \frac{2}{3} \right)$ are independent:
  \begin{eqnarray*}
    &  & \mathbb{E}_{q \sim U ([0, 1]), p \sim G \left( 0, 1, \frac{2}{3}
    \right)} \left[ e^{i \left( t_1 2 \sqrt{2} q | p |^{\frac{1}{3}} + t_2 | p
    |^{\frac{2}{3}} \right)} \right]\\
    & = & \frac{\frac{2}{3}}{2 \Gamma \left( \frac{3}{2} \right)} \int_0^1
    \int_{\mathbb{R}} e^{it_1 2 \sqrt{2} q | p |^{\frac{1}{3}} + it_2 | p
    |^{\frac{2}{3}}} e^{- | p |^{\frac{2}{3}}} \mathrm{d} p \mathrm{d} q\\
    & = & \frac{2}{3 \sqrt{\pi}} \int_0^1 \int_{\mathbb{R}} e^{it_1 2
    \sqrt{2} q | p |^{\frac{1}{3}} + it_2 | p |^{\frac{2}{3}}} e^{- | p
    |^{\frac{2}{3}}} \mathrm{d} p \mathrm{d} q\\
    & = & \frac{4}{3 \sqrt{\pi}} \int_0^1 \int_0^{+ \infty} e^{it_1 2
    \sqrt{2} qp^{\frac{1}{3}} + it_2 p^{\frac{2}{3}}} e^{- p^{\frac{2}{3}}}
    \mathrm{d} p \mathrm{d} q\\
    & \mathop{=}\limits^{q = \frac{1 + \cos \theta}{2}, p =
    \frac{r^3}{2^{\frac{3}{2}}}} & \frac{4}{3 \sqrt{\pi}} \int_0^{\pi}
    \int_0^{+ \infty} e^{it_1 r (1 + \cos \theta) + it_2 \frac{r^2}{2}} e^{-
    \frac{r^2}{2}} \frac{3}{2^{\frac{5}{2}}} r^2 \sin \theta \mathrm{d} r \mathrm{d}
    \theta\\
    & = & \frac{1}{\sqrt{2 \pi}} \int_0^{+ \infty} \left[ \int_0^{\pi}
	    e^{it_1 r (1 + \cos \theta)} \sin \theta \mathrm{d} \theta \right] e^{it_2
    \frac{r^2}{2} - \frac{r^2}{2}} r^2 \mathrm{d} r\\
    & = & \frac{1}{2 \sqrt{2 \pi}} \int_0^{+ \infty} \left[ \int_0^{2 \pi}
	    e^{it_1 r (1 + \cos \theta)} \sin \theta \mathrm{d} \theta \right] e^{it_2
    \frac{r^2}{2} - \frac{r^2}{2}} r^2 \mathrm{d} r\\
    & = & \frac{1}{2 \sqrt{2 \pi}} \int_0^{+ \infty} \int_0^{2 \pi} e^{it_1 r
    (1 + \cos \theta) + it_2 \frac{r^2}{2}} e^{- \frac{r^2}{2}} r^2 \sin
    \theta \mathrm{d} \theta \mathrm{d} r
  \end{eqnarray*}

The above calculations indicate that the joint distribution of $(t_j^{(0)}, k_j^{(0)})$ for the exponential binary HMC sampler is equivalent to the joint distribution of $\left( \frac{4 \sqrt{2}}{3} t_j^{(0)}, k_j^{(0)} \right)$ for M-HMC with $\beta = \frac{2}{3}$. Furthermore, if we multiply the $t_j (k)$ function of M-HMC with $\beta = \frac{2}{3}$ by $\frac{4 \sqrt{2}}{3}$, we get the function $2 \sqrt{2k}$, which is exactly the $t_j (k)$ function for the exponential binary HMC sampler.

The above equivalences imply that the exponential binary HMC has exactly the same behavior as M-HMC with $\beta = \frac{2}{3}$. In fact, the exponential binary HMC sampler behaves like scaling the time of M-HMC with $\beta = \frac{2}{3}$ by $\frac{3}{4 \sqrt{2}}$.
\end{proof}

\section{Some more details on numerical experiments}

\begin{figure}[t!]
	\centering
	\includegraphics[width=0.8\textwidth]{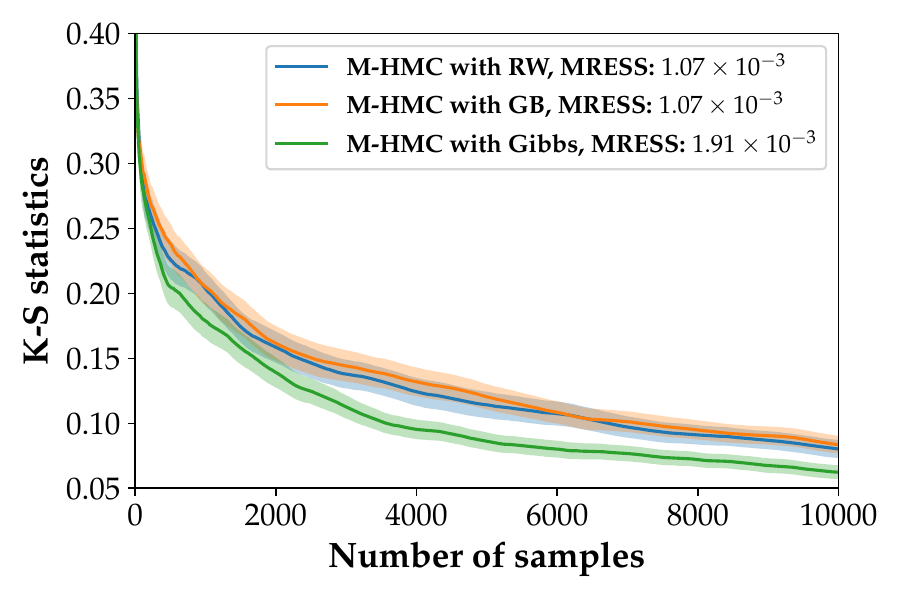}
    \caption{Evolution of K-S statistics of empirical and true samples for $q^\calc_1$, and MRESS for the 24D GMM for M-HMC with 3 different discrete proposals. Colored regions indicate $95\%$ confidence interval, estimated using 192 independent chains.}
    \label{fig:24d_gmm_proposals}
\end{figure}

\begin{figure*}[h!]
    \centering
    \begin{subfigure}[t]{0.487\textwidth}
        \centering
	\includegraphics[width=\textwidth]{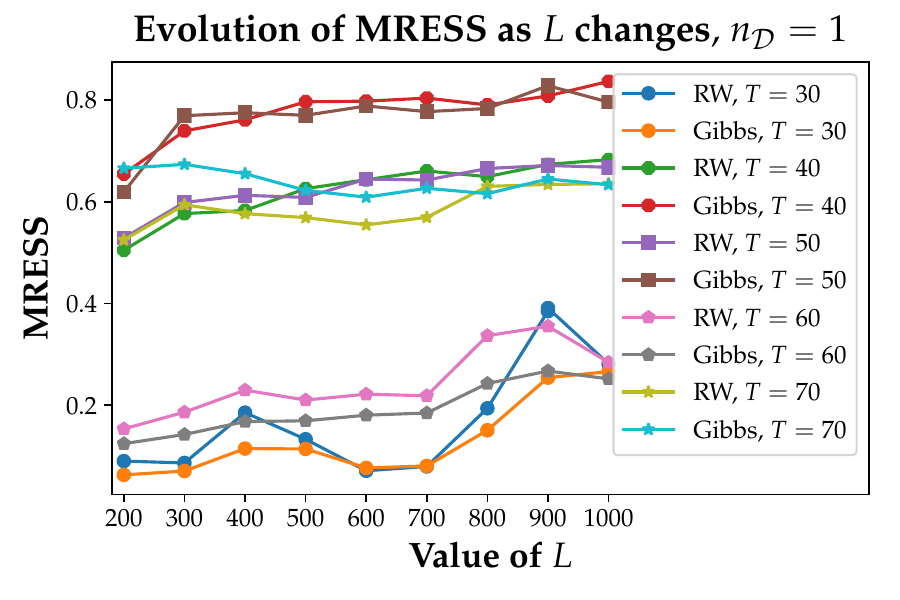}
	\caption{Evolution of MRESS for M-HMC with different discrete proposals as $L$ changes for different travel time $T$, with $n_\cald=1$}
	\label{fig:variable_selection_L_proposals}
    \end{subfigure}%
    \quad
    \begin{subfigure}[t]{0.487\textwidth}
        \centering
	\includegraphics[width=\textwidth]{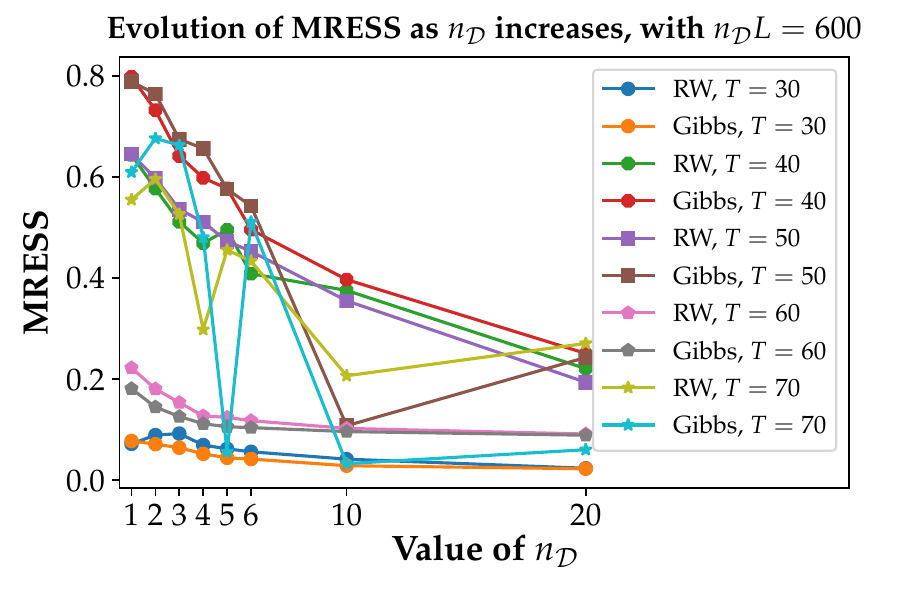}
        \caption{Evolution of MRESS for M-HMC with different discrete proposals as $n_\cald$ increases for different travel time $T$, with $n_\cald L=600$}
	\label{fig:variable_selection_nd_proposals}
    \end{subfigure}%
    \caption{Performances (MRESS of posterior samples for $\beta$) of M-HMC with 3 different discrete proposals as $L$ and $n_\cald$ change on variable selection for BLR.}
    \label{fig:variable_selection_proposals}
\end{figure*}

\begin{figure}[ht!]
    \centering
    \begin{subfigure}[t]{\textwidth}
	\includegraphics[width=\textwidth]{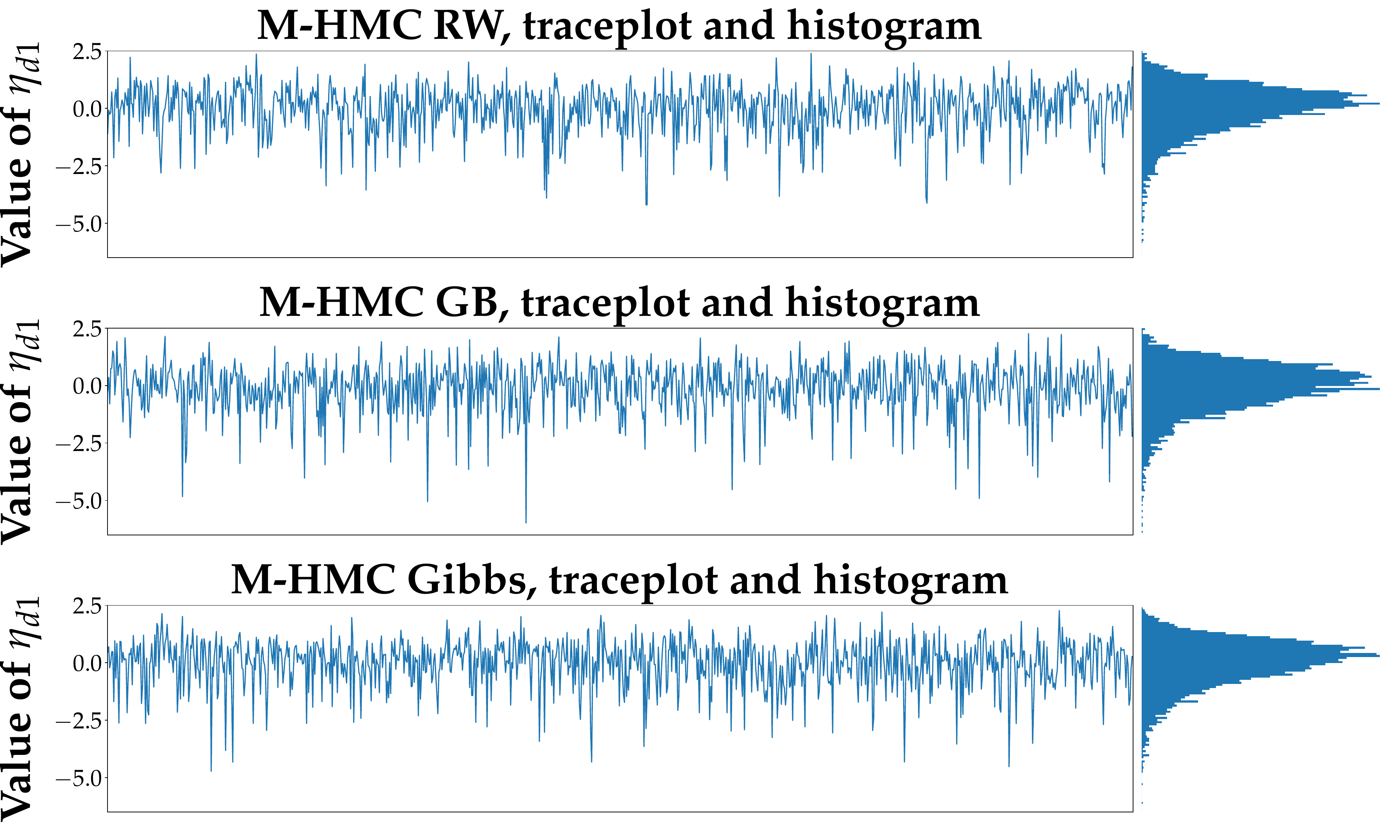}
	\caption{Traceplots and samples histograms of posterior samples of $\eta_{d 1}$ on a document where Gibbs differs from HwG, NwG\&M-HMC in posterior means for $\eta_{d 1}$. Showing first 1000 examples (instead of the 4000 examples shown in Figure~\ref{fig:ctm_gibbs_negative}).}
    \label{fig:ctm_gibbs_negative_proposals}
    \end{subfigure}
    \begin{subfigure}[t]{\textwidth}
    \centering
	\includegraphics[width=\textwidth]{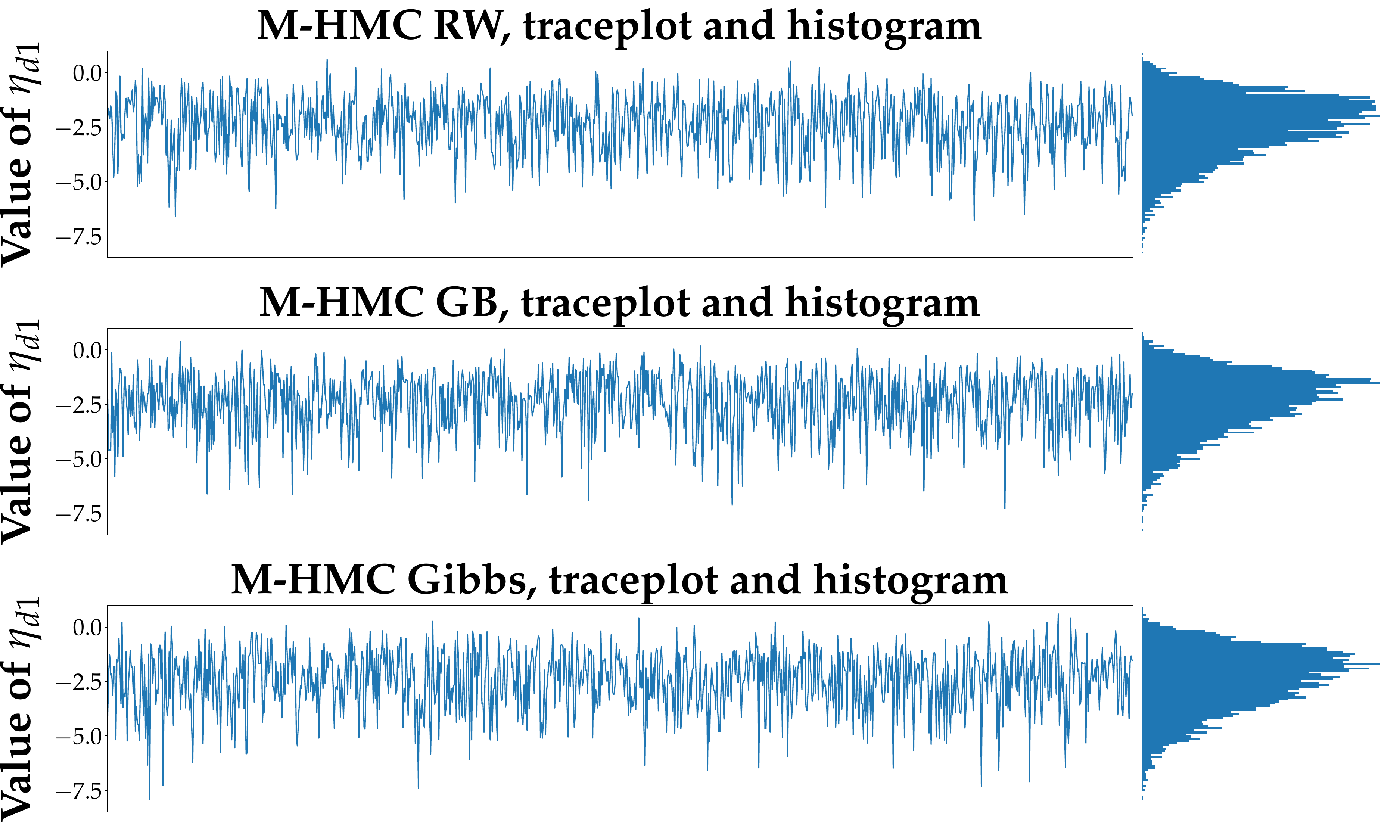}
	\caption{Traceplots and samples histograms of posterior samples of $\eta_{d 1}$ on a document where Gibbs agrees with HwG, NwG\&M-HMC in posterior means for $\eta_{d 1}$. Showing first 1000 examples (instead of the 4000 examples shown in Figure~\ref{fig:ctm_gibbs_positive}).}
	\label{fig:ctm_gibbs_positive_proposals}
	\end{subfigure}
	\caption{Traceplots and samples histograms of posterior samples of $\eta_{d 1}$ on 2 documents for M-HMC with 3 different discrete proposals.}
	\label{fig:ctm_proposals}
\end{figure}
\subsection{Exact parameter values for different samplers for 24D GMM}

NUTS and NwG require no manual tuning. We favor HwG and DHMC by doing a parameter grid search and pick the setting with best MRESS for $x$, resulting in step size 1.1 and number of steps 80 for HwG, and a step-size range $(0.8, 1.0)$ and a number-of-steps range $(30, 40)$ for DHMC. We tune M-HMC by conducting short trial runs and inspecting the acceptance probabilities and traceplots, resulting in $\varepsilon=1.7, L=80, T=136, n_\cald=1$.

\subsection{Some additional CTM results}

We also inspect traceplots and samples histograms of posterior samples for $\eta_{d 1}$ on a document where Gibbs agrees with the other 3 samplers (Figure~\ref{fig:ctm_gibbs_positive}. NwG is excluded since it behaves similarly to HwG but is less efficient). The conclusions are similar to those in Section 3.3 of the main text: M-HMC clearly mixes the fastest, with HwG also outperforming Gibbs. Moreover, HwG and M-HMC explore the state space much more thoroughly.

\subsection{Experiments on M-HMC with different discrete proposals}

In addition to the Gibbs proposals (\textit{Gibbs}) $Q(\tilde{x} | x) \propto \pi(\tilde{x}, q^{\calc})$ used in the main text, we additionally experiment with two simple discrete proposals, a modified~\cite{Liu1996-dg} random-walk proposal (\textit{RW})
\[
Q_j(\tilde{x}|x) \propto \begin{cases}1 & \text{if } \tilde{x}_j \neq x_j, \tilde{x}_i = x_i, i \neq j_{}\\ 0 & \text{otherwise}\end{cases}
\] 
and a modified~\cite{Liu1996-dg} Gibbs proposal (\textit{GB})
\[
	Q_j (\tilde{x} |x) \propto \begin{cases}\pi(\tilde{x}, q^{\calc}) & \text{if } \tilde{x}_j \neq x_j, \tilde{x}_i = x_i, i \neq j_{}\\ 0 & \text{otherwise}\end{cases}
\]

We redo the same experiments for all 3 models in Section 3 in main text with the 2 additional discrete proposals, and compare the performances of M-HMC when different discrete proposals are used.

Figure~\ref{fig:24d_gmm_proposals} shows the results for 24D GMM. It's interesting to see that although \textit{GB} is presumably more informed than \textit{RW}, M-HMC performs similarly (as measured by MRESS) with these two different discrete proposals. \textit{Gibbs} greatly outperforms both \textit{RW} and \textit{GB}, despite previous results~\cite{Liu1996-dg} indicating that modified proposals are more efficient.

Figure~\ref{fig:variable_selection_proposals} shows the results for variable selection in BLR. Since all discrete variables are binary here, \textit{GB} is equivalent to \textit{RW}. As a result, we only show results for M-HMC with \textit{RW} and \textit{Gibbs}. M-HMC with \textit{Gibbs} in general outperforms M-HMC with \textit{RW}, and the behaviors of MRESS for M-HMC with these two different discrete proposals are similar.

For CTMs, we get accurate samples from M-HMC with all 3 discrete proposals, but \textit{Gibbs} again performs the best, followed by \textit{GB}. \textit{RW} performs the worst among all 3 discrete proposals. On the 20 documents used in the main text, we again compare MRESS for $\eta_d$.  The MRESS of M-HMC with \textit{Gibbs} is on average \textbf{2.57} times larger than that of M-HMC with \textit{RW}, and \textbf{1.38} times larger than that of M-HMC with \textit{GB}. The MRESS of M-HMC with \textit{GB} is on average \textbf{1.84} times larger than that of M-HMC with \textit{RW}. Figure~\ref{fig:ctm_proposals} visualizes the performances of the 3 different discrete proposals on 2 documents, similar to Figures~\ref{fig:ctm_gibbs_negative} and~\ref{fig:ctm_gibbs_positive}.

\bibliographystyle{plain}
\bibliography{refs}

\end{document}